\newtheorem{fact}{Fact}
\newcommand{\junk}[1]{}
\begin{document}
\title{Efficient Assignment of Identities
  in Anonymous Populations}
  \author{Leszek G\k{a}sieniec
  \inst{1}
  \and
  Jesper Jansson
  \inst{2}
  \and
  Christos Levcopoulos
  \inst{3}
\and
\newline Andrzej Lingas
\inst{3}
\institute{Department of Computer Science, University of Liverpool, 
Street, L69 38X, U.K.
  \texttt{L.A.Gasieniec@liverpool.ac.uk}
\and 
 Graduate School of Informatics, Kyoto University, Yoshida-Honmachi, Sakyo-ku,
Kyoto 606-8501, Japan.   
\texttt{jj@i.kyoto-u.ac.jp}
\and
Department of Computer Science, Lund University, 22100 Lund, Sweden. 
\newline \texttt{$\{$Christos.Levcopoulos, Andrzej.Lingas$\}$@cs.lth.se}}
  }
  \maketitle
\begin{abstract}

We consider the fundamental problem of assigning distinct labels to
agents in the probabilistic model of population protocols. Our
protocols operate under the assumption that the size $n$ of the
population is embedded in the transition function.  Their efficiency
is expressed in terms of the number of states utilized by agents, the
size of the range from which the labels are drawn, and the expected
number of interactions required by our solutions.  Our primary goal is
to provide efficient protocols for this fundamental problem
complemented with tight lower bounds in all the three aspects.
W.h.p. (with high probability) our labeling protocols
are silent, i.e., eventually each agent reaches its
final state and remains in it forever, and they are safe, i.e., never
update the label assigned to any single agent.
We first present a silent w.h.p. and safe labeling protocol that draws labels
from the range $[1,2n].$ Both, the number of interactions required and
the number of states used by the protocol are asymptotically optimal,
i.e., $O(n \log n)$ w.h.p. and $O(n)$, respectively.  Next, we present
a generalization of the protocol, where the range of assigned labels
is $[1,(1+\varepsilon ) n]$. The generalized protocol requires $O(n
\log n / \varepsilon )$ interactions in order to complete the
assignment of distinct labels from $[1,(1+\varepsilon ) n]$ to the $n$
agents, w.h.p.  It is also silent w.h.p. and safe, and uses
$(2+\varepsilon)n+O(n^c)$ states, for any positive $c<1.$ On the other
hand, we consider the so-called pool labeling protocols that include
our fast protocols. We show that the expected number of interactions
required by any pool protocol is $\ge \frac{n^2}{r+1}$, when the
labels range is $1,\dots, n+r<2n.$ Furthermore, we provide a
protocol which uses only $n+5\sqrt n +O(n^c)$ states, for any
$c<1,$ and draws labels from the range $1,\dots,n.$ The expected
number of interactions required by the protocol is $O(n^3).$
Once a unique leader is elected it produces a valid labeling
and it is silent and safe. On the
other hand, we show that
(even if a unique leader is given in advance)
any silent protocol that produces a valid
labeling and is safe with probability $>1-\frac 1n$, uses $\ge n+\sqrt
{\frac {n-1} 2} -1$ states. Hence, our protocol is almost
state-optimal. We also present a generalization of the protocol to
include a trade-off between the number of states and the expected
number of interactions. Finally, we show that for any silent and safe
labeling protocol utilizing $n+t<2n$ states the expected number of
interactions required to achieve a valid labeling is $\ge
\frac{n^2}{t+1}$.
\end{abstract}
\vfill
\newpage
\section{Introduction}
\label{section: Introduction}
The problem of assigning and further maintaining unique identifiers for
entities in distributed systems is one of the core problems related to
network integrity.  In addition, a solution to this problem is often
an important preprocessing step for more complex distributed
algorithms.  The tighter the range that the identifiers are drawn from,
the harder the assignment problem becomes.

In this paper we adopt the probabilistic population protocol model in
which we study the problem of assigning distinct
identifiers, which we refer to as {\em labels}, to all agents
\footnote{When the size of the label range is equal to the number of agents,
the problem is also called {\em ranking} in the literature \cite{BCC}}.
The adopted model was
originally intended to model large systems of agents with limited
resources (state space) \cite{AA}.  In this model the agents are
prompted to interact with one another towards a solution of a shared
task.  The execution of a protocol in this model is a sequence of
pairwise interactions between randomly chosen agents.  During an
interaction, each of the two agents: the {\em initiator} and the {\em
  responder} (the asymmetry assumed in \cite{AA}) updates its state in
response to the observed state of the other agent according to the
predefined (global) transition function.
For more details about
the population protocol model see Appendix~A.

Designing our population protocols for the problem of assigning unique
labels to the agents (labeling problem), we make an assumption that
the number $n$ of agents is known in advance.  Our protocols would
also work if only an upper bound on the number of agents is known to
agents.  In fact, in such case the problem becomes easier as the range
from which the labels are drawn is larger.  In particular, if we do
not have the limit on $n$ we also do not have limit on the number of
states to be used. More natural assumption is that such a limit is
imposed.  And indeed, there are plenty of population protocols which rely on
the knowledge of $n$ \cite{BCC,CIW}.

Our labeling
protocols include a preprocessing for
electing a {\em leader}, i.e., an agent
singled out from the population, which improves coordination of more
complex tasks and processes.  A good example is synchronization via phase clocks
propelled by leaders.  More examples of leader-based computation can
be found in~\cite{AAE}.

In the unique labeling problem adopted here, the number of utilized
states needs to reflect the number of agents $n.$
Also, $\Omega(n \log n)$ is a natural lower bound
on the expected number of interactions 
required to solve not only the labeling problem
but any non-trivial problem by a population protocol.
The main reason is that $\Omega(n \log n)$ interactions are needed to achieve a positive
constant probability that each agent is involved in at least one interaction \cite{BKR}.

Perhaps the simplest protocol for unique labeling in population networks is as follows \cite{CIW} (cf. \cite{BBS}). Initially, all agents hold label $1$ which is equivalent with all agents being in state $1.$  
In due course, whenever two agents with the same label $i$ interact, the
responder updates own label to $i+1.$ 
The advantage of this simple protocol is that it
does not need any
knowledge of the population size $n$ and it utilizes only $n$ states
and assigns labels from the smallest possible range $[1,n]$\footnote{We shall denote a range
  $[p,\cdots,q]$ by $[p,q]$ from here on.}.
The severe disadvantage is that it needs 
at least a cubic in $n$ number of interactions 
(getting rid of the last multiple label $i,$ for all $i=1,\dots\,n-1,$
requires a quadratic number of interactions in expectation) 
to achieve the configuration
in which the agents have distinct labels. 

In the following two examples of protocols for unique labeling,
we assume that the population size $n$ is 
embedded in the transition function,
such protocols are commonly used and known as  non-uniform protocols~\cite{AG},
and one of the agents is distinguished 
as the leader, see leader based protocols~\cite{AAE}.

In the first of the two examples, we instruct the leader to pass labels $n, n-1,..., 2$ to the encountered subsequently unlabeled yet agents and finally assign $1$ to itself. The protocol uses only $2n-1$ states ($n$ states utilized by the leader and $n-1$ states by other agents) and it
assigns unique labels in the smallest possible range $[1,n]$ to the $n$ agents.  
Unfortunately, this simple protocol requires $\Omega(n^2\log n)$ interactions
because as more agents get their labels, interactions
between the leader and agents without labels become less likely.
The probability of such an encounter drops
from $\frac 1n$ at the beginning to $\frac 1 {n(n-1)}$ at the end of
the process.

By using randomization, we can obtain a much faster simple protocol as
follows.
We let the leader to broadcast the number $n$ to all agents.
It requires $O(n \log n)$ interactions w.h.p.
\footnote{That is with the probability
  at least $1-\frac 1 {n^{\alpha}}$, where $\alpha \ge 1$
  and $n$ is the number of agents.}.
When an agent gets the number $n$, it
uniformly at random picks a
number in $[1,n^3]$ as its label. The probability that a given pair of
agents gets the same label is only $\frac 1 {n^3}.$ Hence, this
protocol assigns unique labels to the agents with probability at
least $1-\frac 1n.$ It requires only $O(n\log n)$ interactions
w.h.p. The drawback is that it uses $O(n^3)$ states and the large
range $[1,n^3].$
This method also needs a large number of random bits independent for each agent.

Besides the efficiency and population size aspects, there are
other deep differences between the three examples of labeling  protocols.
An agent in the first protocol never knows whether or not
it shares its label with other agents. This deficiency cannot happen in the
case of the second protocol  but it takes place in the third
protocol although with a small probability.

The labeling protocols presented in this paper
are {\em silent} and {\em safe}.
We say that a  (non-necessarily labeling) protocol is silent
if eventually each agent reaches
its final state and remains in it forever.
We say that a labeling
protocol is safe if
it never updates the label assigned to any single agent.
While the concept of a silent
population protocol is well established in the literature \cite{BCC,DGS99},
the concept of a safe labeling protocol is new.
The latter property is useful in the situation
when the protocol producing a valid labeling
has to be terminated before completion due to some
unexpected emergency or running out of time. 

Observe that among the three examples of labeling protocols, only the
second one is both silent and safe.  The first example protocol is silent \cite{BCC}
but not safe.  Finally, the third (probabilistic) one is silent and
almost safe as it violates
the definition only with small
probability.

\subsection{Our contributions}

The primary objective of this paper is to provide efficient labeling
protocols complemented with tight lower bounds in the aspects of the
number of states utilized by agents, the size of the range from which
the labels are drawn, and the expected number of interactions required
by our solutions.

In particular, we provide  positive answers
to two following natural questions
under the assumption that
the number $n$ of agents is known 
at the beginning.

\begin{enumerate}
\item
Can one design
a protocol for the
labeling problem requiring an asymptotically
optimal number of $O(n\log n)$ interactions w.h.p., 
utilizing an asymptotically optimal number of $O(n)$ states
and an asymptotically minimal label range of size $O(n)$ ?

\item Can one design a silent and safe protocol for
  the labeling problem utilizing substantially smaller
  number of 
  states than $2n$ and possibly the minimal label range $[1,n]$ ?
\end{enumerate}

We first present a population
protocol that w.h.p. requires an asymptotically
optimal number of $O(n \log n)$ interactions to assign distinct
labels from the range $[1,2n]$.  The protocol uses an asymptotically optimal number of
$O(n)$ states.
We also present a more involved generalization of the protocol, where
the range of assigned labels is $[1,(1+\varepsilon ) n]$. The generalized
protocol requires $O(n \log n / \varepsilon )$ interactions in order to
complete the assignment of distinct labels from $[1,(1+\varepsilon ) n]$
to the $n$ agents, w.h.p.  It uses $(2+\varepsilon)n+O(n^c)$
states, for any positive $c<1.$
Both protocols are silent w.h.p. and
safe.
Furthermore, we consider a natural class of population protocols for the
unique labeling problem, the so-called {\em pool protocols}, including
our fast labeling protocols.  We show that for any protocol in this
class that picks the labels from the range $[1,n+r]$, the expected
number of interactions is $\Omega(\frac {n^2}{r+1})$

Next, we provide a labeling protocol which uses only $n+5\sqrt n +O(n^c)$
states, for any positive $c<1,$ and the label range $[1,n]$. The expected number of interactions
required by the protocol is $O(n^3).$
Once a unique leader is elected it produces a valid labeling
and it is silent and safe.
On the other hand, we show that (even if a unique leader is given in advance)
any silent protocol that produces a valid labeling and is safe
with probability larger than $1-\frac 1n$,
uses at least $n+\sqrt {\frac {n-1} 2} -1$ states.  It follows
that our protocol is almost state-optimal.  In addition, we present a
variant of this protocol which uses $n(1+\varepsilon)+O(n^c)$ states,
for any positive $c<1.$  The
expected number of interactions required by this variation is
$O(n^2/\varepsilon^2),$ where $\varepsilon=\Omega(n^{-1/2}).$
On the other hand, we show that
for any silent and safe labeling protocol utilizing
$n+t<2n$ states the expected number
of interactions required
to achieve a valid labeling is at least $\frac{n^2}{t+1}$.

All our labeling protocols include a preprocessing for electing a
unique leader and assume the knowledge of the population
size $n.$ However, our almost state-optimal protocol (Single-Cycle protocol) can be made
independent of $n$ (see Section 4).

Our results are summarized in Tables 1 and 2.

\subsection{Main ideas of our protocols}

Our first fast labeling protocol roughly operates as follows.  The
leader initially has label $1$ and a range of labels $[2,n].$ During the execution
of the first phase, encountered unlabeled agents also get
a label and an interval of
labels that they can distribute among other agents. Upon a
communication between a labeled agent that has a non-empty interval
and an unlabeled agent, the latter agent gets a label from the
interval and if the remaining part of the interval has length $\ge 2$
then it is shared between the two agents.  After $O(n\log n)$
interactions, a sufficiently large fraction of agents is labeled and
has no additional labels to distribute w.h.p. The leader counts its
own interactions up to $O(\log n)$ in order to trigger the second
phase by broadcasting.  In the latter phase, an agent with a label $x$
and without a non-empty interval
can distribute one additional label $x+n.$ In the first phase of this
protocol the labels from the range $[1,n]$ are distributed rapidly
among the agents. In the second phase the unlabeled agent still have a
high chance of communicating with an agent that can distribute a
label. Roughly, our second, generalized fast labeling protocol is 
obtained from the first one by constraining the set of agents
that may distribute the labels $x+n$ in the second phase
to those having labels in the range $[1,n\epsilon ].$

The main idea of the almost state-optimal
labeling
protocol (Single-Cycle protocol) is to use the leader and an auxiliary
leader {\em nominated} by the leader to disperse the $n$ labels jointly
among the remaining free agents. The leader disperses the first and
the auxiliary leader the second part of each individual label. When a
free agent gets both partial labels, it combines them into its
individual label and then informs the leaders about this.  The two
leaders operate in two embedded loops. For each of roughly $\sqrt n$
partial labels of the leader, the auxiliary leader makes a full round
of dispersing its roughly $\sqrt n$ partial labels.  In the
generalized version of the protocol (k-Cycle protocol), the process is
partially parallelized by letting the leader to form $k$ pairs of
dispensers, where each pair labels agents in a distinct range of size
$n/k.$

\begin{table*}[t]
\begin{center}
\begin{tabular}{||c|c|c|c||} \hline \hline
Theorem & $\#$ states & $\#$ interactions & Range 
\\ \hline \hline
Theorem \ref{theo: 2n} & $O(n)$ & $O(n \log n)$ w.h.p.  & $[1,2n]$
\\ \hline
Theorem \ref{theo: epsilon} & $(2+\varepsilon)n+O(n^c)$, any $c<1$  & $O(n\log n /\varepsilon )$ w.h.p. &
$[1,(1+\varepsilon)n]$
\\ \hline
Theorem \ref{theo: scycle} &  $n+5\cdot\sqrt n+O(n^c)$, any $c<1$ & expected  $O(n^3)$ & $[1,n]$ 
\\ \hline
Theorem \ref{theo: kcycle} & $(1+\varepsilon)n+O(n^c)$, any $c<1$ & expected $O(n^2/\varepsilon^2)$ & $[1,n]$ 
\\ \hline \hline
\end{tabular}
\label{table: 1}
\vskip 0.5cm
\caption{Upper bounds on the number of states, the
  number of interactions and the range required by the
    labeling protocols
  presented in this paper.
  In Theorem \ref{theo: epsilon}, $\varepsilon$ is $\Omega (n^{-1})$
  while in Theorem \ref{theo: kcycle} $\Omega (n^{-0.5})$.}
\end{center}
\end{table*}

\begin{table*}[t]
\begin{center}
\begin{tabular}{||c|c|c|c||} \hline \hline
Protocol type & $\#$ states & $\#$ interactions & Theorem
\\ \hline \hline
any$^1$ &  $n$ & $\Omega(n\log n)$ w.h.p. & Theorem \ref{theo: trivial}
\\ \hline
silent, safe$^2$ & $n+\sqrt {\frac {n-1} 2} -1$ & - & Theorem \ref{theo: ssharp} (1st part)
\\ \hline
silent, safe$^3$, $n+t<2n$ states & - & expected $\frac {n^2}{t+1}$ & Theorem \ref{theo: ssharp} (2nd part)
\\ \hline
pool, range $[1,n+r]$ &  - & expected  $\frac {n^2}{r+1}$  & Theorem \ref{theo: last}
\\ \hline \hline
\end{tabular}
\label{table: 2}
\vskip 0.5cm
\caption{Lower bounds on the number of states or/and the
  number of interactions required by labeling protocols.
  (1) Any labeling protocol
  that is capable to produce a valid labeling.
 (2) The silent protocol in Theorem \ref{theo: ssharp} (first part)
 is assumed to produce a valid labeling and be safe with probability
 greater than $1-\frac 1n$.
(3) The silent protocol in Theorem \ref{theo: ssharp} (2nd part)
 is assumed to produce a valid labeling and be safe with probability $1$.
  }
\end{center}
\end{table*}

\subsection{Related work}

There are several papers concerning labeling of
processing units  (also known as renaming or naming) in different communication models
\cite{CRR11}.
E.g., Berenbrink et al. \cite{BBF} present
efficient algorithms for the so-called lose and tight renaming in
shared memory systems improving on or providing alternative
algorithms to the earlier
algorithms by Alistarh et al. \cite{ADR14,AAG10}.
The lose renaming
where the label space is larger that the number of units is shown
to admit substantially faster algorithms than the tight renaming
\cite{AAG10,BBF}.

The problem of assigning unique labels
to agents has been studied in the model of population protocols by
Beauquier et al. \cite{BBR,BBS}.
In \cite{BBS}, the
emphasis is on estimating the minimum number of states which are
required by apparently non-safe protocols. In \cite{BBR}, the authors
provide among other things a generalization of a leader election
protocol to include a distribution of $m$ labels among $n$ agents,
where $m \leq n$.  In the special case of $m = n$, all agents will
receive unique labels. No  analysis on the number of interactions
required by the protocol is provided in \cite{BBR}.
Their focus is on the feasibility of the solution, i.e., that the
process eventually stabilizes in the final
configuration. Their protocol seems inefficient in
the state space aspect as it needs many
states/bits to keep track of all the labels.

Doty et al.
considered the labeling problem in \cite{DEM} and presented
a subroutine named "UniqueID" for it
based on the technique of traversing a labeled binary tree and
associating agents with nodes in the tree. The subroutine requires
$O(n\log n \log \log n)$ interactions.

The labeling problem has also been studied in the context of
self-stabilizing protocols where the agents start in arbitrary (not
predefined) states, see \cite{BCC,CIW}.  In \cite{CIW}, Cai et
al. propose a solution which coincides with our first example of
labeling protocols presented in the introduction.  In a very recent
work \cite{BCC}, Burman et al. study both slow and fast labeling
protocols, the
latter utilizing an exponential number of states.  The protocols in both
papers require the exact knowledge of $n.$
The work \cite{BCC} focuses on self-stabilizing protocols which cannot
be safe by definition.  It is more proper to compare our protocols
with the initialized
version of the protocols in \cite{BCC}. E.g.,
the leader-driven initialized (silent) ranking protocol in \cite{BCC}
(see Lemma 4.1) requires $O(n^2)$ interactions, uses $O(n)$ states
and it is safe. An analogous variant of the fast ranking protocol
from  \cite{BCC} requiring $O(n\log n)$ interactions
and an exponential number of states is also safe but not silent.
The known labeling protocols are summarized in Table 3.

\begin{table*}[h!]
  \begin{center}
\begin{tabular}{||c|c|c|c|c|c||} \hline \hline
$n$  &  $\#$ interactions & $\#$ states & Range & Properties & Paper
\\ \hline \hline
 unknown & $O(n^3)$ w.h.p. & n & $[1,n]$ & silent & \cite{CIW}
\\ \hline
unknown & $O(n\log n \log \log n)$ w.h.p. & $n^{O(1)}$ & $[1,n^{O(1)}]$ & silent & \cite{DEM}
\\ \hline
 known & $O(n^2)$ expected & $O(n)$ & $[1,n]$ & silent, safe & \cite{BCC}
\\ \hline
 known &  $O(n\log n)$ w.h.p. & $\exp (O(n^{\log n}\log n))$ & $[1,n]$ & safe & \cite{BCC}
\\ \hline \hline
\end{tabular}
\label{table: 3}
\vskip 0.5cm
\caption{Upper bounds on the number of interactions, the
  number of states and the range used by the known
  labeling protocols. In case of the self-stabilizing
  labeling protocols in \cite{BCC},
  the ``safe'' property
  can eventually hold only for their initialized versions.}
\end{center}
\end{table*}

The most closely related problem more studied in the literature
is that of counting the population size, i.e., the
number of agents. It has been recently studied by Aspnes et al.
in \cite{ABB16} and 
Berenbrink et al. in \cite{BKR}.
We assume
that the population size is initially known.
Alternatively, it can be computed by using the protocol
counting the exact population size given in \cite{BKR}.
The aforementioned protocol computes the population size
in $O(n \log n)$ interactions w.h.p.,
using $\tilde{O} (n)$ states.
Another possibility is to use
the protocol computing the approximate population size,
presented in \cite{BKR}. The latter  protocol
requires $O(n\log ^2 n)$ interactions to compute
the approximate size w.h.p.,
using only a poly-logarithmic number of
states. For references to earlier papers on protocols
for counting or estimating the population size,
in particular the papers that introduced the counting
problem and that include the original algorithms on which the improved
algorithms of Berenbrink et al. are based, see \cite{BKR}.

All our protocols include a preprocessing for electing a unique leader
and its synchronization with the proper labeling protocol (e.g., see
the proof of Theorem \ref{theo: 2n}). There is a vast literature
on population protocols for leader election \cite{BGK,DE,ER,GS20}.
For our purposes, the most relevant is the protocol that elects a
unique leader from a population of $n$ agents
using
$O(n \log n)$ interactions and $O(n^c)$ many states, for any positive
constant $c<1,$ w.h.p., described in \cite{BKR,DE} (see also Fact \ref{fact:
  fastleader}).
The newest results
elaborate on state-optimal leader election protocols utilizing 
$O(\log\log n)$ states. These include
the fastest possible protocol~\cite{BGK} based on 
$O(n\log n)$ interactions in expectation,
and a slightly slower protocol~\cite{GS20}
requiring $O(n\log^2 n)$ interactions with high probability.

Our population protocols for unique labeling use also the known
population protocol for (one-way) epidemics, or broadcasting.
It completes spreading a message in $\Theta (n \log n)$
interactions w.h.p. and it uses only two states \cite{ER}
(see also Fact \ref{fact: broad}).

\subsection{Organization of the paper}

In the next section, we provide basic facts on probabilistic
inequalities and population protocols for broadcasting,
counting and leader election. 
In Section 3, we present our fast silent w.h.p. and safe protocol for unique labeling
in the range $[1,2n]$ and its generalization to include the range $[1,n(1+\varepsilon)].$
Section 4
is devoted to the almost state-optimal, roughly
silent and safe
protocol
with the label range $[1,n]$ and its variation.
Section 5 presents lower bounds on the number of states 
or the number of interactions  for silent, safe
and the so-called pool protocols
for unique labeling. We conclude with Final remarks.


\section{Preliminaries}
\label{section: prelim}

\subsection{Probabilistic bounds}

\begin{fact} \label{fact: bound}(The union bound)
  For a sequence $A_1,\ A_2,...., A_r$ of events,
  $Prob(A_1\cup A_2\cup ......A_r)\le \sum_{i=1}^r Prob(A_i).$
\end{fact}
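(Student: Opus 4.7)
The plan is to establish this by a short induction on $r$, the number of events. The base case $r=1$ is trivial since $\mathrm{Prob}(A_1)\le\mathrm{Prob}(A_1)$ holds with equality. The substantive step is $r=2$, where I would invoke the inclusion–exclusion identity $\mathrm{Prob}(A_1\cup A_2)=\mathrm{Prob}(A_1)+\mathrm{Prob}(A_2)-\mathrm{Prob}(A_1\cap A_2)$ and then drop the non-negative intersection term to conclude $\mathrm{Prob}(A_1\cup A_2)\le\mathrm{Prob}(A_1)+\mathrm{Prob}(A_2)$.

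For the inductive step, suppose the inequality holds for any $r-1$ events. Let $B=A_1\cup A_2\cup\cdots\cup A_{r-1}$. Applying the $r=2$ case to the pair $B$ and $A_r$ yields $\mathrm{Prob}(B\cup A_r)\le\mathrm{Prob}(B)+\mathrm{Prob}(A_r)$, and the induction hypothesis bounds $\mathrm{Prob}(B)$ by $\sum_{i=1}^{r-1}\mathrm{Prob}(A_i)$. Combining these two inequalities gives the claim.

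An equivalent, more direct route that avoids induction is to \emph{disjointify} the sequence: set $B_1=A_1$ and $B_i=A_i\setminus(A_1\cup\cdots\cup A_{i-1})$ for $i\ge 2$. Then the $B_i$'s are pairwise disjoint with $\bigcup_{i=1}^r B_i=\bigcup_{i=1}^r A_i$ and $B_i\subseteq A_i$, so by monotonicity $\mathrm{Prob}(B_i)\le\mathrm{Prob}(A_i)$, and by finite additivity $\mathrm{Prob}(\bigcup_i A_i)=\sum_i \mathrm{Prob}(B_i)\le\sum_i\mathrm{Prob}(A_i)$.

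There is no genuine obstacle here: the result is a textbook identity recorded in the Preliminaries solely so that later concentration arguments can cite it as \textbf{Fact~\ref{fact: bound}}. The only real choice is presentational — whether to include the one-line inductive argument or simply invoke subadditivity of probability measures as standard. Given its role as scaffolding for the w.h.p. analyses of the labeling protocols, a brief inductive proof (or even a pointer to any standard probability text) suffices.
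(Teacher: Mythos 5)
Your proof is correct; both the inductive argument via the two-event inclusion--exclusion step and the alternative disjointification argument are standard and complete. The paper itself states this as a preliminary fact without any proof, so there is nothing to compare against --- your treatment (or simply citing subadditivity of probability measures) is entirely adequate.
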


\begin{fact} \label{fact: chernoff}(multiplicative Chernoff lower bound)
  Suppose $X_1, ..., X_n$ are independent random variables
  taking values in $\{0, 1\}.$
  Let X denote their sum and let $\mu= E[X]$
  denote the sum's expected value. Then, for any $\delta \in [0,1],$
  \\
  $Prob(X\le (1-\delta)\mu)\le e^{\frac {\delta^2 \mu}2}$ holds.
  Similarly, for any $\delta \ge 0,$
  $Prob(X\le (1+\delta)\mu)\le e^{\frac {\delta^2 \mu}{2+\delta}}$ holds.  
  \end{fact}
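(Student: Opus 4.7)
The plan is to prove both tail inequalities by the standard Chernoff method: combine Markov's inequality applied to an exponential transform of $X$ with the fact that independence of the $X_i$ turns the moment generating function (MGF) of the sum into a product of individual MGFs. Throughout I will write $p_i=E[X_i]\in[0,1]$, so $\mu=\sum p_i$, and use the pointwise bound $E[e^{tX_i}]=1+p_i(e^t-1)\le \exp(p_i(e^t-1))$, which comes from $1+x\le e^x$. Multiplying over $i$ yields $E[e^{tX}]\le \exp(\mu(e^t-1))$ for every real $t$; this is the one inequality that drives the entire argument.

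For the upper tail bound $Prob(X\ge(1+\delta)\mu)$ (note: I am reading the second displayed inequality as having a $\ge$, since as stated with $\le$ the bound would be trivial once $\delta>0$), I apply Markov to $e^{tX}$ with $t>0$:
\[
Prob(X\ge(1+\delta)\mu) \le e^{-t(1+\delta)\mu}\,E[e^{tX}] \le \exp\bigl(\mu(e^t-1-t(1+\delta))\bigr),
\]
and then optimize by choosing $t=\ln(1+\delta)\ge 0$. This yields the classical form $\bigl(e^\delta/(1+\delta)^{1+\delta}\bigr)^\mu$. For the lower tail I use $t>0$ but apply Markov to $e^{-tX}$:
\[
Prob(X\le(1-\delta)\mu) \le e^{-t(1-\delta)\mu}\,E[e^{-tX}] \le \exp\bigl(\mu(e^{-t}-1+t(1-\delta))\bigr),
\]
and choose $t=-\ln(1-\delta)\ge 0$, producing $\bigl(e^{-\delta}/(1-\delta)^{1-\delta}\bigr)^\mu$.

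The main obstacle — and the only nontrivial calculus in the proof — is converting these raw forms to the clean exponentials stated in the fact. For the lower tail I need $(1-\delta)\ln(1-\delta)\ge -\delta+\delta^2/2$ on $[0,1]$, which follows by expanding $\ln(1-\delta)=-\sum_{k\ge 1}\delta^k/k$ and checking that the remainder after two terms is nonnegative termwise. Plugging this into the exponent gives $\exp(-\mu\delta^2/2)$, matching the first inequality. For the upper tail one shows $(1+\delta)\ln(1+\delta)-\delta\ge \delta^2/(2+\delta)$ for $\delta\ge 0$; the slickest route is to define $f(\delta)$ equal to the left side minus the right side, check $f(0)=0$, and verify $f'(\delta)\ge 0$ by a short differentiation that reduces to $\ln(1+\delta)\ge 2\delta/(2+\delta)$, itself a standard inequality proved by the same $f(0)=0$, $f'\ge 0$ trick.

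With those two analytic inequalities in hand, both bounds of Fact~\ref{fact: chernoff} follow immediately from the optimized MGF estimates above. I expect no hidden difficulties: the only craft is choosing the right simple lower bound for $(1\pm\delta)\ln(1\pm\delta)$ so that the resulting exponent matches the statement exactly, rather than a weaker variant.
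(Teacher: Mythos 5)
Your proof is correct and is the standard Chernoff argument (Markov's inequality applied to $e^{\pm tX}$, the product bound $E[e^{tX}]\le e^{\mu(e^t-1)}$ from independence, optimization at $t=\pm\ln(1\pm\delta)$, and the elementary estimates on $(1\pm\delta)\ln(1\pm\delta)$); the paper states this as a known Fact and gives no proof, so there is nothing to diverge from. You were also right to silently repair the statement as printed: the exponents must carry a minus sign, i.e.\ $e^{-\delta^2\mu/2}$ and $e^{-\delta^2\mu/(2+\delta)}$, and the second inequality must bound $Prob(X\ge(1+\delta)\mu)$ rather than $Prob(X\le(1+\delta)\mu)$ --- these are typos in the Fact, and your derivation produces the intended correct bounds.
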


\begin{fact}\label{fact: part}\cite{ER}
For all $C > 0$ and $0 < \delta < 1$,  during $C n\log n$ interactions,
with probability at least $1-n^{-O(\delta^2C)}$ , each agent participates in at least $2C(1-
 \delta) \log n$ and at most $2C(1+\delta) \log n$ interactions.
\end{fact}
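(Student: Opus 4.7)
The plan is to prove Fact~\ref{fact: part} by applying the multiplicative Chernoff bound (Fact~\ref{fact: chernoff}) to each agent separately and then combining via the union bound (Fact~\ref{fact: bound}).

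First I would fix an arbitrary agent $v$ and, for each of the $Cn\log n$ interactions indexed by $i$, introduce the indicator variable $X_{v,i}$ that equals $1$ exactly when $v$ is one of the two agents participating in interaction $i$. Since every interaction is an independent uniform draw of an ordered pair of distinct agents, the variables $X_{v,1},\dots,X_{v,Cn\log n}$ are i.i.d.\ Bernoulli with success probability $2(n-1)/(n(n-1))=2/n$. Writing $X_v=\sum_i X_{v,i}$ for the total number of interactions in which $v$ participates, we obtain $\mu=\mathbb{E}[X_v]=(2/n)\cdot Cn\log n=2C\log n$.

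Next I would apply Fact~\ref{fact: chernoff} to both tails of $X_v$. The lower tail gives $\Pr[X_v\le 2C(1-\delta)\log n]\le \exp(-\delta^2\mu/2)=n^{-\delta^2 C}$, and the upper tail gives $\Pr[X_v\ge 2C(1+\delta)\log n]\le \exp(-\delta^2\mu/(2+\delta))=n^{-2\delta^2 C/(2+\delta)}$. Since $\delta\in(0,1)$, both exponents are $\Omega(\delta^2 C)$, so the probability that $X_v$ escapes the interval $[2C(1-\delta)\log n,\,2C(1+\delta)\log n]$ is at most $n^{-\Omega(\delta^2 C)}$ for a single agent.

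Finally I would take a union bound over the $n$ agents: the probability that \emph{some} agent participates in fewer than $2C(1-\delta)\log n$ or more than $2C(1+\delta)\log n$ interactions is at most $n\cdot n^{-\Omega(\delta^2 C)}=n^{1-\Omega(\delta^2 C)}$, which is absorbed into the form $n^{-O(\delta^2 C)}$ claimed in the statement (this inequality is only informative when $C$ is taken large enough that the exponent is negative, which is the regime the fact is used in). There is no real obstacle here; the only mild subtlety is the asymmetric shape of the two Chernoff exponents, but since the upper-tail exponent $2\delta^2 C/(2+\delta)$ is always within a constant factor of $\delta^2 C$ for $\delta\le 1$, the conclusion in Fact~\ref{fact: part} is unaffected.
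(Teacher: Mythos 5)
Your proof is correct and is the standard derivation of this fact: the paper itself does not prove Fact~\ref{fact: part} but imports it from \cite{ER}, where the argument is exactly the per-agent Chernoff bound (participation probability $2/n$ per interaction, hence mean $2C\log n$) followed by a union bound over the $n$ agents. You also correctly handle the two minor subtleties — the asymmetric upper-tail exponent $2\delta^2C/(2+\delta)$ being within a constant factor of $\delta^2C$ for $\delta\le 1$, and the extra factor of $n$ from the union bound being absorbed into the $n^{-O(\delta^2C)}$ notation — so nothing further is needed.
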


\subsection{Broadcasting, counting and leader election}

We shall refer to the following  broadcast
process which can be completed during $\Theta(n\log n)$ interactions w.h.p.
Each agent is either in a state of M-type
(got the message) or in a state of $\neg$M-type.
Whenever an agent in a state of M-type interacts
with an agent in a state of $\neg$M-type, the latter changes its state  to a state of
M-type (gets the message).  The process starts when the first agent  gets the message and
completes when all agents have the message.

\begin{fact}\label{fact: broad}
  There is a constant $c_0$ , such that for $c \ge c_0,$ the broadcast process
  completes in $c n\log n$ interactions with probability at least $1-n^{-\Theta(c)}.$
\end{fact}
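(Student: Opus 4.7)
The plan is to analyze the broadcast as a pure birth process on the number $I_t$ of informed agents. We have $I_0=1$, and conditional on $I_t=k$ the next interaction increases $I_t$ by one with probability $p_k=\frac{2k(n-k)}{n(n-1)}$, since this is the chance that the randomly chosen ordered pair contains exactly one informed agent. Hence $T=\sum_{k=1}^{n-1}T_k$ where the $T_k$ are independent geometrics of means $1/p_k$, and the expected total is $\sum_{k=1}^{n-1}\frac{n(n-1)}{2k(n-k)}=\Theta(n\log n)$; the work is promoting this to a high-probability bound of the required form $1-n^{-\Theta(c)}$.

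I would split the process into a ``growth'' phase ($I_t\le n/2$) and a ``coverage'' phase ($I_t>n/2$). For coverage, fix any particular agent $v$ and condition on $v$ being uninformed at the start of that phase. Once $I_t\ge n/2$, the probability that a single interaction informs $v$ is at least $\frac{2(n/2)}{n(n-1)}=\frac{1}{n-1}$, so after an additional $\tfrac{c}{2}n\log n$ interactions the probability that $v$ is still uninformed is at most $\bigl(1-\tfrac{1}{n-1}\bigr)^{(c/2)n\log n}\le n^{-c/2}$, and a union bound over the $n$ agents yields failure probability $n^{-\Theta(c)}$ for the coverage phase. For growth, I would use a doubling argument: partition the phase into $O(\log n)$ sub-phases in which $I_t$ doubles, and in the sub-phase starting with $I_t\approx 2^i$ model the next $\Theta(n)$ interactions as Bernoulli trials with success probability $\Omega(2^i/n)$. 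A multiplicative Chernoff bound (Fact~\ref{fact: chernoff}) shows that each sub-phase completes (i.e.\ $I_t$ doubles) within $O(n)$ interactions except with probability $e^{-\Theta(c)}=n^{-\Theta(c)}$, and a union bound across the $O(\log n)$ sub-phases confines Phase~1 to $\tfrac{c}{2}n\log n$ interactions with the required probability.

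The main obstacle is the growth phase, because very early on $p_k=\Theta(k/n)$ is tiny and fluctuations dominate the mean, so a single global Chernoff bound does not suffice. The doubling decomposition localizes the difficulty: the sub-phase starting at $I_t\approx 2^i$ has expected duration $(n/2^i)\cdot 2^i=\Theta(n)$ and per-trial success probability $\Omega(2^i/n)$, so over a window of $\Theta(cn)$ interactions the expected number of successes is $\Omega(c\cdot 2^i)\ge \Omega(c)$, and Fact~\ref{fact: chernoff} yields the $e^{-\Theta(c)}$ per-sub-phase failure bound. Summing failure probabilities across the $O(\log n)$ sub-phases of growth and adding the coverage-phase contribution yields the claimed overall failure probability $n^{-\Theta(c)}$ for $c\ge c_0$ with a sufficiently large constant $c_0$.
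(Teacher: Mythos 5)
The paper does not actually prove Fact~\ref{fact: broad}; it is quoted as a known result on one-way epidemics from the cited literature, so your attempt cannot be compared to an in-paper argument. Your overall decomposition (growth phase up to $n/2$ informed agents, then per-agent coverage plus a union bound) is the standard one, and the coverage half is correct: once at least $n/2$ agents are informed, a fixed uninformed agent $v$ is hit with probability at least $\frac{1}{n-1}$ per interaction, so $\frac{c}{2}n\log n$ further interactions leave $v$ uninformed with probability at most $n^{-c/2}$, and the union bound over $n$ agents is fine for $c\ge c_0>2$.

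The growth phase, however, has a genuine gap. You allocate $\Theta(cn)$ interactions to the sub-phase that doubles the informed set from $2^i$ to $2^{i+1}$ and claim a per-sub-phase failure probability of $e^{-\Theta(c)}=n^{-\Theta(c)}$. That identity is false: $c$ is a constant, so $e^{-\Theta(c)}$ is a constant, not polynomially small in $n$. Concretely, for $i=0$ the single informed agent participates in a given interaction with probability $\Theta(1/n)$, so in $\Theta(cn)$ interactions the expected number of successes is only $\Theta(c)$ and the probability of \emph{zero} successes is $e^{-\Theta(c)}$, a constant; summing such constants over the $O(\log n)$ sub-phases leaves you with an overall failure probability bounded away from $n^{-\Theta(c)}$. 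The repair is to allocate a budget $b_i=\Theta\bigl(\frac{cn\log n}{2^i}+cn\bigr)$ to sub-phase $i$; these budgets still sum to $O(cn\log n)$ by geometric decay, the expected number of successes in sub-phase $i$ becomes $\Omega(c\log n)$, and Fact~\ref{fact: chernoff} then gives a per-sub-phase failure probability of $n^{-\Omega(c)}$, after which the union bound over $O(\log n)$ sub-phases goes through. (Equivalently, one can treat the whole growth phase as a sum of independent geometric variables $T_k$ with success probabilities $p_k=\Omega(k/n)$ and invoke a tail bound for sums of geometrics, which yields $n^{-\Theta(c)}$ directly; but the plain per-sub-phase Chernoff with a uniform $\Theta(cn)$ budget does not.)
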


Berenbrink et al. \cite{BKR} obtained among other things the following results on
counting the population size, i.e., the number of agents.

\begin{fact}\label{fact: approx}
  There is a protocol for a population of an unknown number $n$
  of agents such that w.h.p., after $O(n \log^2 n)$ interactions
  the protocol stabilizes and each agent
  holds the same estimation of the population size which is either $\lceil \log n \rceil$
  or $\lfloor \log n \rfloor .$ The protocol uses $O(\log^2 n \log \log n)$
  states.
\end{fact}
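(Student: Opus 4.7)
\medskip
\noindent\emph{Proof sketch.}
Since the statement is quoted from \cite{BKR}, my plan follows their construction at a high level. The idea is to have the population jointly sample a geometric-type random variable whose maximum across the $n$ agents concentrates around $\log n$, with the required random bits being extracted from the scheduler's uniform choice of initiator and responder in each interaction.

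First I would run a level tournament: every agent starts at level $0$, and whenever two agents at the same level $\ell$ interact, one of them advances to level $\ell+1$ while the other either stays at $\ell$ or is marked inactive. Because each such event behaves like an independent fair coin flip, the expected number of agents ever reaching level $\ell$ is roughly $n/2^{\ell}$, so the highest level $M$ reached lies in $\{\lfloor\log n\rfloor,\lceil\log n\rceil\}$ w.h.p.\ by applying Fact~\ref{fact: chernoff} at every level and combining the resulting bounds through Fact~\ref{fact: bound}. Next I would propagate the running maximum via the one-way epidemic of Fact~\ref{fact: broad}: an agent adopts any strictly larger level observed in a partner, so with at most $O(\log n)$ distinct level updates, each flooding the population in $\Theta(n\log n)$ interactions w.h.p., the total interaction budget is $O(n\log^2 n)$. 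A clock attached to a pre-elected leader signals silence once sufficiently many interactions pass without a further update, and a final epidemic broadcasts this stability marker together with the estimate.

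The main obstacle is matching the state bound $O(\log^2 n\log\log n)$. Each agent must store its current tournament level and current estimate (each taking $O(\log\log n)$ bits), plus a phase clock counting up to $\Theta(n\log^2 n)$ leader-interactions that requires roughly $O(\log^2 n)$ distinct values, yielding $O(\log^2 n\log\log n)$ states overall. The delicate step of the analysis is verifying that, after the interaction budget has elapsed, no further level increment can occur except with probability $n^{-\Omega(1)}$; this reduces once more to a Chernoff/union-bound argument applied simultaneously to all $O(\log n)$ active levels.
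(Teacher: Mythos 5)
Note first that the paper offers no proof of this fact; it is quoted verbatim from \cite{BKR}, so your reconstruction must carry the full weight of that result on its own. The sketch has a genuine gap at its central step: the claim that the maximum tournament level $M$ lands in $\{\lfloor\log n\rfloor,\lceil\log n\rceil\}$ w.h.p.\ does not follow from level-wise Chernoff bounds. If the loser of a level-$\ell$ interaction is deactivated, the number of agents reaching level $\ell+1$ equals $\lfloor n_\ell/2\rfloor$ only after \emph{all} active level-$\ell$ agents have paired up; once just two such agents remain, the expected number of interactions until they meet is $\Theta(n^2)$, and this bottleneck recurs at each of the top levels, so the exact-halving tournament costs $\Omega(n^2)$ interactions rather than $O(n\log^2 n)$. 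If instead you truncate each level to fit an $O(n\log n)$ budget (or let losers stay active), the halving is only approximate, and at the top $\Theta(\log\log n)$ levels the expected occupancies are $O(\log n)$ and eventually $O(1)$; Fact~\ref{fact: chernoff} with $\mu=O(1)$ gives no two-sided high-probability control there, so $M$ can miss $\log_2 n$ by much more than $1$ with probability that is not $n^{-\Omega(1)}$. (The same issue defeats the ``independent geometrics plus maximum'' reading: $\Pr[\max_i X_i \ge \log_2 n+k]\le 2^{-k}$ forces $k=\Theta(\log n)$ for a w.h.p.\ upper bound, i.e.\ only a constant-factor approximation of $\log n$.)

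This tension between precision and interaction count is exactly the content of the cited result. The protocol of \cite{BKR} reaches $\pm1$ precision by first computing a rough (constant-factor) estimate and then refining it by aggregating $\Theta(\log n)$ independent sub-experiments, each contributing a value of size $O(\log n)$; the accumulated sum of magnitude $O(\log^2 n)$ is what accounts for the $O(\log^2 n)$ factor in the state bound, and the $\Theta(\log n)$ epidemic phases of $\Theta(n\log n)$ interactions each account for the $O(n\log^2 n)$ interaction bound. Your accounting instead attributes the $O(\log^2 n)$ factor to a leader's phase clock, which is a symptom that the mechanism actually responsible for the precision is absent from the sketch. The surrounding infrastructure you describe (epidemic propagation of a running maximum via Fact~\ref{fact: broad}, union bounds via Fact~\ref{fact: bound}, a termination broadcast) is reasonable, but it does not substitute for the refinement step.
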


\begin{fact}\label{fact: exact}
  There is a protocol for a population of an unknown number $n$
  of agents such that w.h.p., after $O(n \log n)$ interactions
  the protocol stabilizes and each agent
  holds the exact population size. The protocol uses $\tilde{O}(n)$
  states.
\end{fact}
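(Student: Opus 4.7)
The plan is to combine three ingredients: fast leader election, a coalescing doubling protocol for aggregating counts at the leader, and broadcasting. First, I would elect a unique leader in $O(n\log n)$ interactions w.h.p.\ using the fast protocol from \cite{BKR,DE}.

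Second, I would run a counting phase. Each non-leader initially holds a weight $w=1$, and the leader holds a counter $C$ initialized to $1$. Whenever two non-leaders carrying equal weight $w$ interact, they combine: one becomes dormant with $w=0$, and the other takes weight $2w$. Whenever the leader meets an agent with positive weight $w$, the leader absorbs it, setting $C\leftarrow C+w$ and zeroing the agent's weight. Because weights double on pairing, after $O(n\log n)$ interactions w.h.p.\ only $O(\log n)$ positive-weight agents remain, each of magnitude $\Omega(n/\log n)$; the leader then mops these up in $O(n\log n)$ further interactions, and eventually $C=n$.

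Third, the leader broadcasts $C$ to every agent using the one-way epidemic of Fact \ref{fact: broad}, in another $O(n\log n)$ interactions w.h.p. For the state count, each non-leader records a weight in $\{0,1,2,4,\dots,n\}$, and the leader additionally stores $C\in\{0,\dots,n\}$; together with a phase indicator and the leader-election machinery, the total number of states per agent is $n\cdot\mathrm{polylog}(n)=\tilde O(n)$.

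The main obstacle will be analysing the interleaved coalescing/absorption dynamics. A naïve ``leader-absorbs-unit-tokens'' scheme costs $\Theta(n^2\log n)$ interactions (via the harmonic-like sum $\sum_{k=1}^{n-1} n(n-1)/(2k)$) because productive leader encounters become vanishingly rare once few positive-weight agents remain. The crux is to prove that w.h.p.\ the doubling phase reduces the number of positive-weight agents from $n-1$ down to $O(\log n)$ within $O(n\log n)$ interactions, so that the leader's sequential sweep then only has to absorb a logarithmic number of heavy tokens; standard concentration tools (Fact \ref{fact: chernoff}) applied level-by-level together with Fact \ref{fact: part} to control per-level participation should suffice, but getting the levels not to interfere with the leader's simultaneous absorption is the delicate part of the argument.
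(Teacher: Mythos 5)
First, a framing point: the paper does not prove this statement at all --- it is Fact~\ref{fact: exact}, imported from Berenbrink et al.~\cite{BKR} --- so your attempt must stand on its own. It does not: the central claim of your coalescing phase, that equal-weight doubling reduces the number of positive-weight agents from $n-1$ to $O(\log n)$ within $O(n\log n)$ interactions, is false. When $m$ agents currently carry weight $2^k$, a single interaction coalesces two of them with probability $\Theta(m^2/n^2)$, so driving the level-$0$ population from $n-1$ down to $C$ already costs $\sum_{m\ge C}\Theta(n^2/m^2)=\Theta(n^2/C)$ expected interactions; within an $O(n\log n)$ budget you only reach $C=\Theta(n/\log n)$. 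Worse, a weight-$1$ agent can disappear only by meeting another weight-$1$ agent or the leader, so the per-interaction rate of removing a level-$0$ agent when $m_0$ remain is $\Theta((m_0^2+m_0)/n^2)$, and clearing level $0$ entirely --- which is necessary for the counter to reach exactly $n$ --- costs $\sum_{m_0\ge 1} n^2/(m_0^2+2m_0)=\Theta(n^2)$ expected interactions. The failure is also w.h.p., not just in expectation: once $m_0\le\log n$, the chance of even one further level-$0$ removal during the next $O(n\log n)$ interactions is $O(\log^3 n/n)=o(1)$. Two smaller defects: the leader's absorptions create odd parities at individual levels, stranding agents that can then only be cleared by a $\Theta(n^2)$-cost direct meeting with the leader; and ``$O(\log n)$ survivors each of magnitude $\Omega(n/\log n)$'' does not follow even if there were $O(\log n)$ survivors, since they may carry weights $1,2,4,\dots$.

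The obstruction is structural rather than technical. Any scheme that funnels the exact count token-by-token into a single sink pays $\Omega(n^2)$ for the last few hand-offs, for essentially the same reason as the paper's own pool-protocol lower bound (Theorem~\ref{theo: last}): near the end, the agents that still hold relevant weight are too few for productive encounters to occur at rate better than $O(1/n^2)$ each. This is precisely why the protocol of \cite{BKR} does not aggregate the sum at one agent but determines the count collectively, with enough redundancy that useful interactions remain frequent throughout (and this redundancy is also where the $\tilde O(n)$ rather than $O(n)$ state bound comes from). Concentration tools such as Fact~\ref{fact: chernoff} and Fact~\ref{fact: part} cannot rescue the level-by-level analysis, because the expected rates themselves are too small; to make progress you would need to abandon exact single-sink aggregation, not merely sharpen the analysis of it.
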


There is a vast literature on population protocols for leader election
\cite{ER}.  For our purposes, the following fact will be sufficient.
Its idea is to start leader election with a subprotocol of \cite{GS20}
that elects a junta of substantially sublinear in $n$ number of
leaders. The junta is formed using $O(n\log n)$ interactions.  Then,
when state space of size $n^c$ is available, $c<1,$ only a constant
number of rounds of leader elimination is needed, each requiring
$O(n\log n)$ interactions. For more details, see
\cite{BKR,DE}.

\begin{fact}\label{fact: fastleader}
There is a protocol that elects a unique leader from a population of
$n$ agents
using
$O(n \log n)$ interactions and
$O(n^c)$  many states, for any positive
constant $c<1$, w.h.p. \cite{BKR,DE}.
\end{fact}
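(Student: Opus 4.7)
The plan is to combine a fast junta-election subprotocol with a brief elimination stage that exploits the $n^c$ state budget, following the two-stage approach of \cite{BKR,DE}. First, I would invoke the junta election subprotocol of \cite{GS20} as a black box: in $O(n\log n)$ interactions w.h.p., and with only $O(\log\log n)$ states per agent, it reduces the set of candidate leaders from all $n$ agents to a ``junta'' whose size is $n^{1-\Omega(1)}$, and every agent outside the junta has permanently demoted itself. A distinguished junta member can furthermore drive a phase clock whose phases last $\Theta(n\log n)$ interactions w.h.p., which gives the synchronization needed to sequence the later steps.

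Second, I would exploit the $O(n^c)$ state budget to finish the elimination in $O(1)$ synchronized phases. In each phase, every surviving candidate draws a fresh uniform identifier from a range of size polynomial in $n$ (fitting within the $n^c$ budget), using randomness extracted from who-meets-whom in the scheduler. The identifier is then propagated to the entire population by a one-way epidemic (Fact~\ref{fact: broad}), which completes in $O(n\log n)$ interactions w.h.p.; any candidate whose identifier is strictly dominated by some identifier it observes demotes itself. A union-bound calculation (Fact~\ref{fact: bound}) combined with the fact that the junta is already substantially sublinear shows that after a constant number of such phases only a single candidate remains w.h.p. Summing over the two stages yields the claimed $O(n\log n)$-interaction and $O(n^c)$-state complexity.

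The main obstacle is not the elimination logic itself but the technical layer of justifying synchronization and randomness extraction inside a population protocol: the junta-driven phase clock must produce phases long enough for each epidemic to complete, yet not so long that the total interaction count exceeds $O(n\log n)$, and the identifiers drawn by different candidates must be close enough to independent for the union-bound argument to go through. These concerns are handled in \cite{BKR,DE} by applying multiplicative Chernoff bounds (Fact~\ref{fact: chernoff}) to the per-agent interaction counts within each phase (Fact~\ref{fact: part}), and form the bulk of the actual analysis; the leader-elimination scheme above is then a relatively direct consequence.
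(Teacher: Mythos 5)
Your proposal matches the paper's own justification of this fact, which is only a sketch deferring to \cite{BKR,DE}: start with the junta-election subprotocol of \cite{GS20} to cut the candidate set down to a substantially sublinear junta in $O(n\log n)$ interactions, then use the $O(n^c)$ state budget to run a constant number of synchronized elimination rounds of $O(n\log n)$ interactions each. The extra detail you supply (random identifiers spread by one-way epidemics, candidates demoting on seeing a dominating identifier, phase-clock synchronization) is exactly the mechanism the cited works use, so the approach is the same.
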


\section{Labeling with asymptotically optimal
  number of interactions, nearly optimal 
  number of states and range}
\label{section: range}
In this section, we provide a silent w.h.p. and
a safe labeling protocol
that assigns
unique labels from the range $[1,2n]$ to $n$ agents
in $O(n\log n)$ interactions w.h.p. Then, we generalize the protocol
to include the range $[1,(1+\varepsilon)n],$ where
$\varepsilon$ 
does not have to be a constant; it can even be as small
as $O(n^{-1})$. We show that
the generalized
protocol assigns unique labels from $[1,(1+\varepsilon)n]$
in $O(n\log n /\varepsilon)$ interactions w.h.p.
In the first  protocol,
the agents use $O(n)$ states, in the second protocol only $(2+\varepsilon)n+O(n^c)$
states, for any positive $c<1.$

\subsection{Range  $[1,2n]$}

The protocol runs in two main phases
preceded by a leader election preprocessing. The idea of the first phase
resembles that of load balancing \cite{BKR}, the difference
is that tokens (in our case labels and interval sub-ranges) are distinct.

\begin{figure}[h]
\begin{center}
\includegraphics[scale=0.5]{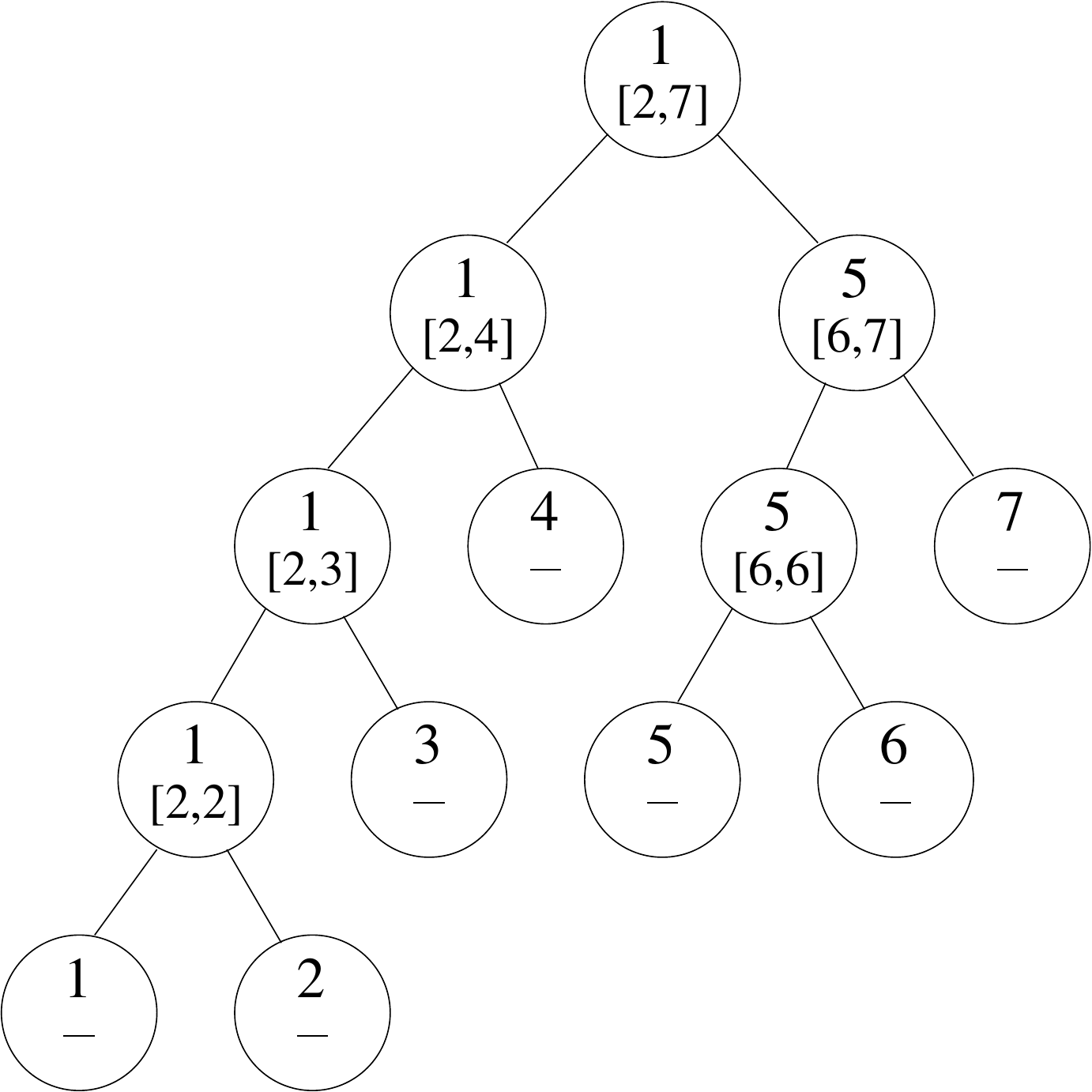}
\end{center}
\caption{An example of the partition tree of the start interval.}
\label{fig:interval}
\end{figure}

At the beginning of the first phase, the leader 
assigns the label $1$ and also temporarily the
interval $[2,n]$ to itself.  Next, whenever two agents
interact, one with label and a temporarily assigned interval $[q,r]$
where $r>q$
and the other without label, 
the former agent shrinks its interval to $[q,
  \lfloor \frac {q+r} 2 \rfloor ]$ and it gives away the label $\lfloor \frac {q+r} 2 \rfloor+1$
and if $\lfloor \frac {q+r} 2 \rfloor+2\le r$ also the sub-interval
$[\lfloor \frac {q+r} 2 \rfloor+2,r]$ to the latter agent.
Furthermore, whenever an agent with label and a temporarily assigned
singleton interval $[q,q]$ interacts with an agent without label,
the former agent cancels
its interval and gives the label $q$ to the latter
agent.
In the
remaining cases, interactions have no effect.
Note that during the first phase a sub-tree
of the binary tree of the partition  of
the start interval $[1,n]$ with $n$ leaves
determined by the protocol rules is formed, see Fig. \ref{fig:interval}.
Also observe that when an agent at an intermediate node of
the tree interacts with an agent without label then the former
agent migrates to the left child of the node while
the latter agent lands at the right child of the node.

In the second phase,
when an agent with a label $i\in [1,n]$ at a leaf of the tree interacts with an
agent without label for the first time
then the latter agent gets the label $i+n.$ Interactions between agents (if any) at
intermediate nodes of the tree and agents without labels are defined as in the first phase.

The following lemmata are central in showing that $O(n\log n)$
interactions are sufficient w.h.p. to implement our protocol.

\begin{lemma}\label{lem: key}
There is a constant $c$ such that after
$c n\log n$ interactions in the first phase the number
of agents without labels  drops below $n/4$
w.h.p.
\end{lemma}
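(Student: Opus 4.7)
The plan is to analyze the first phase as an epidemic-style process on the binary partition tree shown in Fig.~\ref{fig:interval}. Identify the labeled agents at step $t$ with the leaves of a growing binary tree $T_t$ whose root is the leader's initial interval. Each useful interaction (a dispenser encountering an unlabeled agent) splits a leaf into two children whose sizes are each at most $\lceil s/2\rceil$, where $s$ is the parent interval size, so a leaf at depth $d$ carries an interval of size at most $\lceil n/2^d\rceil$. The sum of dispenser interval sizes equals exactly $U_t$, the number of unlabeled agents. Writing $D_t$ for the current number of dispensers, the probability that the next interaction is useful is $p_t = 2U_tD_t/(n(n-1))$, and such an interaction decreases $U_t$ by exactly one; thus once one proves $D_t=\Omega(U_t/\log n)$ while $U_t\ge n/4$, one gets $p_t=\Omega(1/\log n)$, and the expected number of interactions needed to drive $U_t$ from $n-1$ down to below $n/4$ telescopes to $O(n\log n)$.

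The crux is therefore the lower bound on $D_t$. I would derive it from Fact~\ref{fact: part}, which guarantees that in $cn\log n$ interactions every agent participates in $\Theta(\log n)$ of them w.h.p. While $U_t\ge n/4$, a standard Chernoff bound shows that a constant fraction of any fixed agent's partners are unlabeled, so every dispenser that has been present throughout the window undergoes $\Omega(\log n)$ splits, which by the depth bound above drives its interval size down to $O(1)$. Hence either $U_t<n/4$ already (and we are done), or the great majority of dispensers have $O(1)$-size intervals, giving $D_t=\Omega(U_t)$, which is even stronger than what is required.

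Finally, to upgrade the expected bound to a w.h.p.\ statement, I would Chernoff-bound the total number of useful interactions in $cn\log n$ steps: each step is useful with probability $\Omega(1/\log n)$ conditional on $U_t\ge n/4$, so the count stochastically dominates a binomial with mean $\Omega(n)$ and is $\Omega(n)$ w.h.p., which for large enough $c$ suffices to produce the $3n/4$ useful interactions required to drop $U_t$ below $n/4$. The principal obstacle is the lower bound on $D_t$ itself: in an adversarial schedule one dispenser could in principle hoard a large interval while its tree-siblings did all the splitting, so one really does need the quantitative ``no agent sits idle'' guarantee of Fact~\ref{fact: part}; a convenient fallback, for the bootstrap phase where $L_t\le \log_2 n$, is to couple the dispenser count with the one-way broadcast of Fact~\ref{fact: broad}, since in that regime every labeled agent is still a dispenser and our process is literally identical to broadcasting.
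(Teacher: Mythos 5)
Your reformulation via $U_t$ (unlabeled agents), $D_t$ (dispensers) and the invariant that the interval sizes sum to exactly $U_t$ is correct, and the concluding drift/Chernoff step would go through if the lower bound on $D_t$ were in place. The genuine gap is at the point you yourself call the crux. The claim that ``every dispenser that has been present throughout the window undergoes $\Omega(\log n)$ splits'' controls essentially only the leader: almost all dispensers are \emph{created during} the window, and a dispenser created at time $\tau$ close to $cn\log n$ has had only $O(1)$ interactions since its creation, so Fact~\ref{fact: part} — which counts an agent's interactions over the \emph{whole} window, most of which may predate its becoming a dispenser — gives it no splits at all. Its interval is then half of whatever its parent held at time $\tau$, which is large precisely when the parent itself split late, and so on up the tree. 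What must be excluded is a root-to-leaf chain of late splits, i.e., a path of the partition tree that accumulates fewer than $\log_2 n+1$ splits during the window. This is a statement about \emph{paths}, not about individual agents: the frontier interval of a fixed path is held by a succession of different agents, and the relevant counter is the number of successes (current holder meets an unlabeled agent) summed over all of them. Your union bound over agents does not bound this quantity, and your broadcast coupling only covers the bootstrap, not the middle regime in which a few paths may lag while holding a constant fraction of the unassigned labels (e.g., the leader's right-children of sizes roughly $n/2, n/4,\dots$ could in principle all sit idle, leaving $D_t = O(\log n)$ with $U_t$ near $n$).

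The paper closes exactly this hole with a contradiction argument organized around paths: assuming $n/4$ agents stay unlabeled, for each of the $n$ root-to-leaf paths the per-interaction success probability is at least $\frac{1}{4n}$ no matter which agent currently sits on the path, so by Fact~\ref{fact: chernoff} each path collects $\log_2 n+1$ successes within $cn\log n$ interactions with probability at least $1-\frac{1}{n^2}$, and the union bound (Fact~\ref{fact: bound}) over the $n$ paths forces every leaf to be reached, i.e., all $n$ labels to be handed out — contradicting the survival of $n/4$ unlabeled agents. If you replace your per-agent split count by this per-path success count, your argument is repaired, but at that point the $U_t,D_t$ bookkeeping becomes unnecessary; the per-path count is the step that cannot be avoided.
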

\begin{proof}
  The proof is by contradiction. Suppose that a set
  $F$ of at least $n/4$ agents without labels survives at least
  $c n\log n$ interactions, where the constant $c$ will be specified later.

  Consider first the leader agent starting with the interval
  $[2,n]$ during the aforementioned interactions.
  When the agent interacts with an agent without label its
  interval is roughly halved. We shall call such an interaction
  a success. The probability of success is at least
  $\frac 1 {4n}.$ The expected number of successes is at least
  $\frac c {4} \log n.$ By using Chernoff multiplicative
  bound given in Fact \ref{fact: chernoff}, we can set $c$ to enough large constant so
  the probability of at least $\log_2 n +1$ successes
  will be at least $1-\frac 1 {n^2}.$ This means that
  the leader will end up without
  any interval with so high probability during the
  $cn\log n$ interactions. The leader chooses
  the leftmost path in the binary partition tree 
  of the start interval $[1,n].$ Consider an arbitrary
  path $P$ from the root to a leaf in the tree.
  Note that several agents during distinct interactions
  can appear on the path. 
  Define as a success an interaction in which an agent
  currently on $P$ interacts with an agent without
  label. The expected number of successes is
  again  at least $\frac c {4} \log n$ and again we can conclude
  that there are at least $\log_2 n +1$  successes
  with probability  at least $1-\frac 1 {n^2}.$
  Simply, the probabilities of interacting with
  an agent without label are the same for
  all agents with labels, i.e., on some paths
  in the tree. Another way to argue is that the
  leader could make other decisions as
  to which roughly half of interval to
  preserve and the path choice.
  By the union bound (Fact \ref{fact: bound}), we conclude that all the
  $n$ paths from the root to the leaves in
  the tree could be developed during
  the $cn\log n$ interactions,
  so all agents would get a label,
  with probability at least $1-\frac 1n.$
  We obtain a contradiction with the
  so long existence of the set $F.$
 \qed \end{proof}

\begin{lemma} \label{lem: second}
  If the second phase starts after $c n\log n$ interactions,
  where $c$ is the constant from Lemma \ref{lem: key},
  then only $O(n\log n)$ interactions are needed to assign
  labels in $[1,2n]$ to the
  remaining agents without labels,   w.h.p.
\end{lemma}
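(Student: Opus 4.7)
The plan is to condition on the high-probability event of Lemma~\ref{lem: key}, so that Phase~2 starts with at most $U\le n/4$ unlabeled agents, and then show that under this initial condition all remaining unlabeled agents receive labels within $O(n\log n)$ further interactions w.h.p.

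The main structural observation I would establish first is a deterministic invariant: throughout Phase~2, the number of \emph{active distributors} --- labeled agents that can still give out a label, either by splitting their current interval, by handing over a singleton, or by spending the single Phase-2 extra label $i+n$ associated with a leaf --- is at least $n-U\ge 3n/4$. Indeed, a labeled agent becomes permanently ``retired'' only after its interval has been emptied and its extra slot has been used, and each such retirement is associated with a distinct freshly-assigned label in $[n+1,2n]$. Hence when $u$ agents are still unlabeled, the number of retired labeled agents is at most $U-u$ (the total number of labels handed out so far in Phase~2), and the number of active distributors is therefore at least $(n-u)-(U-u)=n-U\ge 3n/4$.

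Next I would fix an agent $a$ that is unlabeled at the start of Phase~2 and analyse its individual interactions. Whenever $a$ participates in an interaction while still unlabeled, its partner is uniform over the other $n-1$ agents, so by the invariant the partner is an active distributor --- and $a$ therefore becomes labeled --- with probability at least $\tfrac{3n/4}{n-1}\ge 3/4$, regardless of the past. Consequently, the probability that $a$ is still unlabeled after $k$ of its own interactions is at most $(1/4)^k$. Invoking Fact~\ref{fact: part}, for a sufficiently large constant $c'$, each agent participates in $\Omega(\log n)$ interactions during any window of $c'n\log n$ interactions w.h.p.; choosing $k=\Theta(\log n)$ large enough, the probability that a fixed initially-unlabeled $a$ is still unlabeled after $c'n\log n$ Phase-2 interactions is at most $n^{-\alpha-1}$ for a constant $\alpha$ of our choice. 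A union bound over the at most $n/4$ initially-unlabeled agents then completes the proof.

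The main obstacle I expect is the rigorous verification of the $3n/4$ invariant, i.e.\ correctly accounting for all the bookkeeping of intervals being partitioned, singletons being handed over (turning their holders into leaves whose extra slots are still available), and leaves spending their unique extra slot at most once; it is this combinatorial accounting, not the probabilistic part, that has to be done with care. Once this invariant is in hand, the per-interaction success probability bound of $3/4$ is uniform in time, and the concentration step becomes routine.
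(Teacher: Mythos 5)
Your overall route is essentially the paper's own (its second, ``alternative'' argument): a deterministic lower bound on the number of labeled agents that can still hand out a label, followed by Fact~\ref{fact: part} and a per-interaction success probability that is a constant, then a union bound over the unlabeled agents. That structure is sound. However, the specific invariant you state --- at least $n-U\ge 3n/4$ active distributors at all times --- is not correct, and you rightly flagged the accounting as the delicate point. The slip is this: each Phase-2 label $i+n$ removes an agent from the distributor pool at \emph{both} ends of the interaction. The leaf agent with label $i\in[1,n]$ spends its one-shot extra slot and retires, \emph{and} the recipient acquires a label in $[n+1,2n]$ and therefore never becomes a distributor (only agents with labels in $[1,n]$ can give out labels, either from an interval or via the $i+n$ rule). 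Your count ``retired $\le U-u$'' charges only the first of these two losses. Writing $k_2\le U-u$ for the number of labels in $[n+1,2n]$ assigned so far in Phase~2, the number of labeled non-distributors is $2k_2$, so the correct bound on active distributors is $(n-u)-2k_2\ge n-2U\ge n/2$, not $n-U$. (Newly labeled agents with labels in $[1,n]$ are unproblematic: they arrive either at an intermediate node with a nonempty interval or at a leaf with an unused slot, so they join the distributor pool.)

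This is a fixable constant, not a fatal gap: with the corrected invariant the per-interaction success probability for a still-unlabeled agent is at least $\frac{n/2}{n-1}\ge \frac12$ rather than $3/4$, the survival probability after $k$ own interactions is at most $2^{-k}$, and the rest of your argument (Fact~\ref{fact: part} with $k=\Theta(\log n)$, then the union bound over at most $n/4$ agents) goes through verbatim. This matches the paper's own bound of $n/2$, which it obtains by counting only ``agents with labels in $[1,n]$ that can give a label'': starting from at least $3n/4$ such agents and subtracting the at most $n/4$ slot expenditures, so the $[n+1,2n]$-labeled agents never enter its count and need not be subtracted. One further point both you and the lemma statement gloss over (the paper defers it to the proof of Theorem~\ref{theo: 2n}) is that leaf agents can only use their extra slot once they have received the phase-transition broadcast; that costs an additional $O(n\log n)$ interactions w.h.p.\ and is accounted for separately.
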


\begin{proof}
  The number of agents without labels at the beginning of the second
  phase is at most $n/4$ w.h.p.  Hence, at the beginning of this
  phase the number of agents with labels is at least
  $\frac 3 {4}n$ w.h.p. An agent with label $i\le n$ at a leaf of the tree
  can give the label $i+n$ to an agent without
  label only once. Since this can happen 
  at most $\frac n {4}$ times,
  the number of agents with labels in $[1,n]$ that
  can give a label
  is always at least $\frac n 2$ w.h.p. We conclude that for an
  agent without label the probability of an interaction with an agent
  that can give a label is is at least almost $\frac 1 {2n}.$
  Hence, after each $O(n)$ interactions the expected number
  of agents without label halves. It follows that the expected
  number of such interactions rounds is $O(\log n).$ Consequently,
  the number of the rounds is also $O(\log n)$ w.h.p. by Chernoff bound
  (Fact \ref{fact: chernoff}).

  An alternative way to obtain the $O(n\log n)$ bound on the number of
  interactions w.h.p.  is to use Fact \ref{fact: part}
  with $C=O(\frac 1{1/2})$ and
  $\delta =\frac 12 .$ Then, each agent will interact
  with at least $C\log  n$ agents w.h.p.
  during $C n\log n$ interactions. Consequently, the
  probability that a given agent does not interact with any agent that
  can give a label during the aforementioned interactions is $(1-\frac
  1 2)^{O(2 \log n )}$. Hence, by picking enough large $C$,
  we conclude that each agent (in particular without label) will
  interact with at least one agent that can give a label 
  during the $C n\log n$ interactions w.h.p.
  \qed \end{proof}

\begin{lemma}\label{lem: unique}
 During both phases, no pair of agents
  gets the same label.
\end{lemma}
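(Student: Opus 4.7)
The plan is an invariant-preservation argument. I would define, for every moment $t$ in the execution, the multiset $I_t$ consisting of all labels currently assigned to agents together with all integers contained in intervals currently held by agents, and prove the invariant that $I_t$, viewed as a set, equals $[1,n]$ with no repetitions, throughout Phase~1 and, during Phase~2, when restricted to labels and interval values lying in $[1,n]$.

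First I would establish the base case: immediately after leader election, only the leader holds label $1$ together with interval $[2,n]$, so $I_0=\{1\}\cup[2,n]=[1,n]$ and there are no repetitions. Next I would verify preservation under each Phase~1 interaction type. For the interval-split rule applied to $[q,r]$ with $r>q$, set $m=\lfloor (q+r)/2\rfloor$. From $q<r$ one gets $q\le r-1$, hence $m\le r-1$, so in particular $m+1\le r$; the three parts $[q,m]$, $\{m+1\}$, and $[m+2,r]$ (the last empty precisely when $m+2>r$) therefore form a disjoint cover of $[q,r]$, which shows that the contribution of the two interacting agents to $I_t$ is unchanged. For the singleton-dispense rule, $[q,q]$ is simply replaced by the label $q$ carried by the recipient, again with no change in the multiset. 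Every other interaction is a no-op by the protocol's definition. Since the protocol is safe, i.e., labels once assigned are never rewritten, the invariant implies that at every moment each value in $[1,n]$ is held as a label by at most one agent.

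Then I would extend the argument to Phase~2. The Phase~1 invariant restricted to $[1,n]$ still holds, because Phase~2 dispensing interactions, which produce labels in $[n+1,2n]$, modify neither the giver's label nor any interval. A leaf agent holding label $i\in[1,n]$ hands out $i+n$ at most once, by the ``for the first time'' clause in the protocol description; combined with the fact that each $i\in[1,n]$ is held by a unique agent, this guarantees that each value in $[n+1,2n]$ is dispensed to at most one agent. Since the ranges $[1,n]$ and $[n+1,2n]$ are disjoint, no collision can occur across phases either.

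The only delicate point will be the split arithmetic, namely verifying that $m+1\le r$ so that the dispensed label actually lies inside the original interval, and that the boundary case $m+2>r$ is consistent with the convention that the right sub-interval is simply omitted; everything else is routine bookkeeping on top of the invariant.
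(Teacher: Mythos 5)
Your proposal is correct and follows essentially the same route as the paper's proof: a disjointness invariant on the labels and interval contents maintained across every interaction in Phase~1 (and for the sub-$n$ labels in Phase~2), combined with the observation that each label $i+n$ is dispensed at most once by the unique holder of label $i$. You simply make explicit the split arithmetic and the base case that the paper leaves implicit.
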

\begin{proof}
The uniqueness of the label assignments in
  the first phase follows from the disjointedness of the
  labels and intervals assigned to agents before and after each interaction.
  This argument also works for the labels not exceeding $n$
  assigned later in the second phase. Finally, the uniqueness
  of the labels of the form $i+n$ follows from the uniqueness
  of the labels of the agents passing these labels.
 \qed \end{proof}
  
\begin{theorem}\label{theo: 2n}
  There is a
  safe protocol
for population of $n$ agents that w.h.p. assigns
unique labels in the range $[1,2n]$ to the agents equipped
with $O(n)$ states
in $O(n\log n)$ interactions. The protocol is also silent w.h.p.
\end{theorem}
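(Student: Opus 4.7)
The plan is to compose the protocol from three stages executed in sequence: a leader election preprocessing, Phase 1 (the halving-interval distribution described above), and Phase 2 (the $+n$ extension), then bound the interactions, states, safety, and silence of each stage and combine them via the union bound.

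First I would run the leader election subprotocol of Fact \ref{fact: fastleader}, which elects a unique leader in $O(n \log n)$ interactions w.h.p.\ using $O(n^c)$ states. Once elected, the leader initializes with label $1$ and interval $[2,n]$ and starts Phase 1. In parallel, the leader maintains a counter of its \emph{own} interactions and, when this counter reaches a threshold $\Theta(\log n)$ chosen so that the global interaction count is at least the constant $c$ of Lemma \ref{lem: key} times $n \log n$, it triggers Phase 2 by initiating a one-way epidemic broadcast (Fact \ref{fact: broad}). By Fact \ref{fact: part}, during $cn \log n$ global interactions the leader itself participates in $\Theta(\log n)$ interactions w.h.p., so this trigger fires inside a $\Theta(n \log n)$ window w.h.p.

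By Lemma \ref{lem: key}, at the moment Phase 2 is triggered fewer than $n/4$ agents remain unlabeled w.h.p. The broadcast that switches every agent into its Phase 2 behaviour completes in $O(n \log n)$ interactions w.h.p.\ by Fact \ref{fact: broad}, and by Lemma \ref{lem: second} the remaining unlabeled agents all receive labels in the range $[n+1,2n]$ within $O(n \log n)$ further interactions w.h.p. Summing the four high-probability events (leader election, Phase 1 duration, broadcast, Phase 2 completion) and applying Fact \ref{fact: bound} preserves the w.h.p.\ guarantee, yielding an $O(n \log n)$ total interaction bound. Uniqueness of the final labeling and the safety property (no agent ever overwrites its label) are exactly what Lemma \ref{lem: unique} and the inspection of the transition rules give us.

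For the state count, each agent's state is a tuple consisting of: its current label in $[1,2n]$ (or $\bot$); a possibly-empty subinterval that is always one of the nodes of the binary partition tree of $[1,n]$, which has at most $2n-1$ nodes; a Phase 1 vs.\ Phase 2 flag propagated by the broadcast; and, for leaves in Phase 2, a single bit recording whether the $+n$ label has already been dispensed. Each coordinate contributes $O(n)$ or $O(1)$ values, so the labeling stage itself uses $O(n)$ states; together with the $O(n^c)$ states inherited from leader election this is $O(n)$ overall. Silence w.h.p.\ follows because, conditioned on Phase 2 completing, every agent either holds a label with an empty interval and no unused dispense-bit, or is a tree-internal node whose descendants are all labeled; in either case no transition rule can fire again.

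The main obstacle is the synchronization of the two phases: the leader must delay Phase 2 long enough for Lemma \ref{lem: key} to apply, yet start it soon enough that the overall budget remains $O(n \log n)$, all without knowing the global interaction count. The local counter combined with the concentration of Fact \ref{fact: part} around $\Theta(\log n)$ local interactions per agent resolves this; a small conservative constant in the threshold absorbs the two-sided deviation, and the $O(n \log n)$ broadcast delay from Fact \ref{fact: broad} does not break the asymptotic bound because Lemma \ref{lem: second} only needs the Phase 1 snapshot to hold when Phase 2 truly begins at each agent.
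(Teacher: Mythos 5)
Your proposal follows essentially the same route as the paper's proof: leader election via Fact~\ref{fact: fastleader}, the leader counting its own interactions and using Fact~\ref{fact: part} to time the broadcast (Fact~\ref{fact: broad}) of the phase transition, Lemmata~\ref{lem: key}, \ref{lem: second}, \ref{lem: unique} for the two phases and uniqueness, and a union bound over the high-probability events. The one step that does not hold up as written is the state count: you describe the state as a tuple whose coordinates include both a label in $[1,2n]$ and a subinterval ranging over the $2n-1$ tree nodes, and then conclude $O(n)$ states because ``each coordinate contributes $O(n)$ or $O(1)$ values'' --- but a product of two coordinates each of size $\Theta(n)$ gives $\Theta(n^2)$, not $O(n)$. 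The bound is saved by the observation (which the paper makes explicit) that the label and the interval are not independent: an agent's current tree node determines both its label and its remaining interval, so one needs only $O(1)$ states per tree node (phase flag, and for leaves the dispensed bit) plus one state per label in $[n+1,2n]$, yielding $O(n)$ in total. You should state that dependence explicitly rather than counting the coordinates separately.
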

\begin{proof}
  Under the assumption that the leader election preprocessing
  provides a unique leader, the correctness of label assignment in both
  phases w.h.p. and the fulfilling of the definition of a silent and safe
  protocol follows from Lemmata \ref{lem: key},
  \ref{lem: second}, and \ref{lem: unique} and the specification
  of the protocol, respectively.

 For the purpose of the leader election preprocessing, we use the simple leader
election protocol using $O(n\log n)$ interactions and $O(n^c)$ states,
for any positive constant $c,$ described in \cite{BKR,DE}
(Fact \ref{fact: fastleader}).  The phase clock
(based on junta of leaders) from \cite{GS20} is also formed in
$O(n\log n)$ interactions, using $O(\log\log n)$ states and we use
this clock to count the required (by the simple leader election
protocol) time $\Omega(n\log n).$ When this time is reached on the
clock we switch from leader election to our proper labeling protocol.
The two aforementioned processes can be run simultaneously, resulting
in additional state usage $O(n^c \log \log n)$ (still fine for our needs).
Thus, the leader election preprocessing
  and its synchronization with the proper labeling protocol
  in two phases add $O(n\log n)$ interactions and $o(n)$ states w.h.p.
  It provides a unique leader w.h.p. It follows that w.h.p. the whole protocol
  provides a correct labeling, it is silent and safe. In fact, we can make it safe (with probability $1$)
  by prohibiting agents to change or get rid of an assigned label.
  Note that this constraint does not affect the operation of the
  protocol when a unique leader is provided by the preprocessing.
    
Both phases require $O(n\log n)$
interactions w.h.p. by Lemmata \ref{lem: key}, \ref{lem: second}.
  
To put the two phases described in Lemmata \ref{lem: key}, \ref{lem:
    second} together, we let the leader agent to count its
interactions.
  When the number of interactions of the leader in the first phase
  exceeds an appropriate multiplicity of $\log n$, the total number
  of interactions in the first phase achieves the required lower bound
  from Lemma \ref{lem: key} w.h.p. by Fact \ref{fact: part}.
  Therefore,
  then the leader starts broadcasting the message on the transition to the
  second phase to the other agents.  By Fact \ref{fact: broad},
  the broadcasting increases the number of interactions only by
  $O(n\log n)$ w.h.p.  (The leader can also stop the second phase in a
  similar fashion.)

  To save on the number of states, instead of having states
  corresponding to all possible sub-intervals of $[1,n],$
  we consider states corresponding to the nodes
  of the interval partition tree (see Fig. \ref{fig:interval})
  whose sub-tree is formed in
  the first phase. More precisely, we associate two states with
  each intermediate node of the binary tree on $n$ leaves and $n-1$
  intermediate nodes.
  They indicate whether or not the agent at the intermediate
  node has already received the message about the
  transition to the second phase.
  Next, we associate four states to each leaf of the tree.
  They indicate similarly whether or not the agent at the
  leaf has already received the phase transition message
  and whether or not the agent has already passed a label
  to an agent without label in the second phase, respectively.
With each label in the
  range $[n+1,2n],$ we associate only a single state.
    Additionally, there are $O(\log n)$ states used by the leader
    to count interactions in order to start
    the second phase. Recall also that
    the leader election preprocessing
    requires $o(n)$ additional states.
    Thus the total number of states
    does not exceed $2n+4n+n+o(n).$
  \qed \end{proof}

By combining the protocol of Theorem \ref{theo: 2n}
with that of  Berenbrink et al. for exact counting the population
size (Fact \ref{fact: exact}), we obtain
the following corollary on unique labeling when
the population size is unknown to agents initially.

\begin{corollary}\label{cor: 2n}
  There is a
  protocol
for a population
of $n$ agents that assigns
unique labels in the range $[1,2n]$ to the agents
initially not knowing the number $n,$ equipped
with $\tilde{O}(n)$ states,
in $O(n\log n)$ interactions w.h.p.
\end{corollary}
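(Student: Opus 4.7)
The plan is a two-stage composition. First, I would run the exact population-size counting protocol of Berenbrink et al.\ (Fact~\ref{fact: exact}) so that w.h.p.\ every agent learns the true value of $n$ after $O(n\log n)$ interactions, at a cost of $\tilde O(n)$ states. Second, with $n$ now known locally at every agent, I would invoke the labeling protocol of Theorem~\ref{theo: 2n}, which w.h.p.\ produces a valid labeling from $[1,2n]$ in $O(n\log n)$ additional interactions using only $O(n)$ further states. Summing the two budgets, the total number of interactions is $O(n\log n)$ w.h.p.\ by a union bound, and the total state count is still $\tilde O(n)$.

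The subtlety is synchronization: the labeling stage must not begin firing its transitions before every agent holds the correct estimate of $n$, since acting on a stale estimate could corrupt the interval-partition tree of Theorem~\ref{theo: 2n} and destroy safety or distinctness. To enforce a clean separation, I would run a junta-based phase clock (as in the proof of Theorem~\ref{theo: 2n}, borrowed from \cite{GS20}) in parallel with the counting protocol and program it to wait $c\cdot n\log n$ interactions, with $c$ large enough that Fact~\ref{fact: exact} guarantees counting has stabilized w.h.p.\ by then. When the clock fires, a designated agent broadcasts a ``start labeling'' signal (Fact~\ref{fact: broad}); only upon receiving this signal does an agent activate its Theorem~\ref{theo: 2n} transitions, using the locally stored value of $n$. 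The phase clock and broadcast add only sublinear extra states, which are absorbed into the $\tilde O(n)$ term.

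Correctness and the probabilistic bounds then follow by a union bound over four high-probability events: (i) the counting protocol stabilizes within the scheduled window; (ii) the phase clock reaches its threshold on schedule; (iii) the broadcast of the start signal completes in $O(n\log n)$ interactions; (iv) the Theorem~\ref{theo: 2n} subprotocol, run with the correct $n$, produces a valid $[1,2n]$-labeling in $O(n\log n)$ further interactions w.h.p.

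The main obstacle is really just this composition bookkeeping, i.e.\ ensuring that no agent can begin executing the labeling transitions while still carrying a stale estimate of $n$. Once the two protocols are placed on disjoint state tracks and the labeling track is gated on the broadcast signal, the analysis reduces to combining the two existing guarantees via the union bound, and no new probabilistic argument is required.
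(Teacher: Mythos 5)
Your proposal is correct and follows essentially the same route as the paper: run the exact-counting protocol of Fact~\ref{fact: exact} first, then the labeling protocol of Theorem~\ref{theo: 2n}, and glue them together with the same phase-clock/broadcast synchronization used between the two phases in the proof of Theorem~\ref{theo: 2n}. The only cosmetic difference is that the paper reuses the leader already elected by the counting protocol rather than re-electing one for the labeling stage, which does not affect the stated bounds.
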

\begin{proof}
  We run first the protocol for exact counting (Fact \ref{fact: exact})
  and then
  our protocol for unique labeling (Theorem \ref{theo: 2n})
  using the leader elected by the counting protocol.
  We can synchronize the three protocols in a similar fashion
  as we synchronized the two phases of our protocol additionally
  using $O(n \log n)$ interactions and $O(\log n)$
  states.
 \qed \end{proof}

By using the method of approximate counting from \cite{BKR}
(Fact \ref{fact: approx})
instead of that for exact counting (Fact \ref{fact: exact}),
we can decrease the number of
states to $O(n)$
at the cost of increasing the label range to $[1,8n]$ and the
number of interactions required to $O(n \log^2 n).$

\subsection{Range  $[1,(1+\varepsilon)n]$}

The new protocol is obtained by
the following modifications in the previous one.
The leader which counts the number of own interactions
starts broadcasting the phase transition
message when the number of agents without
labels drops below $n\varepsilon /4$ w.h.p. (see Lemma \ref{lem: gkey}).
The information about the transition to the second
phase affects only the agents at the leaves of
the interval partition tree, corresponding to
labels in $[1,n\varepsilon ].$ When they get
the message about the phase transition, they
know that they can pass a label which is the sum
of their own label and $n$ to the first agent
without label they interact with. For this reason,
only the agents at the
leaves corresponding to labels in $[1,n\varepsilon]$
as well as the agents that are at the nodes that are
ancestors of the aforementioned leaves participate
in the broadcasting of the phase transition message.
(Observe that the number of agents at these ancestors is $O(n\varepsilon)$
and an agent at such an ancestor also has a label in $[1,n\varepsilon].$)
In the second phase, besides the agents at the leaves
corresponding to labels in $[1,n\varepsilon]$ and the agents without
labels, also the agents at the intermediate nodes of the tree
(if any) can really interact, in fact as in the first phase.

The following generalization of Lemma \ref{lem: key} is straightforward.

\begin{lemma}\label{lem: gkey}
  Let $c$ be the constant from the statement of Lemma \ref{lem: key}.
  During $c n\log n/\varepsilon $ interactions in the first phase the number  of
  agents without label drops below $n \varepsilon/4$ w.h.p.
\end{lemma}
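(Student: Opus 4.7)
The plan is to repeat the proof of Lemma \ref{lem: key} almost verbatim with two parameters rescaled. I would replace the threshold $n/4$ on the size of the surviving unlabeled set by $n\varepsilon/4$, and the interaction budget $cn\log n$ by $cn\log n/\varepsilon$, keeping the very same constant $c$ furnished by Lemma \ref{lem: key}. The argument proceeds by contradiction: assume that a set $F$ of at least $n\varepsilon/4$ agents without labels survives all $cn\log n/\varepsilon$ interactions of the first phase.

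Next, I would fix an arbitrary root-to-leaf path $P$ of the interval partition tree and track the sequence of successes along $P$, where a success is an interaction in which an agent currently located on $P$ meets an agent without label. Under the hypothesis $|F|\ge n\varepsilon/4$, each individual interaction contributes a success with probability at least $\varepsilon/(4n)$, so the expected number of successes over the $cn\log n/\varepsilon$ interactions is at least $(c/4)\log n$. This is the key point of the generalization: the $1/\varepsilon$ blow-up in the interaction budget exactly offsets the factor-$\varepsilon$ shrinkage in the success probability and recovers precisely the expectation that appeared in Lemma \ref{lem: key}.

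From here the argument is identical to Lemma \ref{lem: key}: the multiplicative Chernoff lower bound (Fact \ref{fact: chernoff}), applied with the very same constant $c$, yields at least $\log_2 n + 1$ successes on $P$ with probability at least $1-1/n^2$, so $P$ is fully developed down to its leaf; the union bound (Fact \ref{fact: bound}) over the $n$ root-to-leaf paths of the tree then shows that with probability at least $1-1/n$ every agent in the population obtains a label, contradicting the existence of $F$. There is no genuine obstacle beyond confirming the probability bookkeeping of the previous paragraph, which is exactly why the paper bills the lemma as a straightforward generalization; one only has to note that the statement is nontrivial precisely in the range $\varepsilon = \Omega(1/n)$ permitted by Theorem \ref{theo: epsilon}, where the threshold $n\varepsilon/4$ is at least a positive constant.
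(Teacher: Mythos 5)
Your proposal is correct and follows exactly the paper's own route: the paper proves Lemma~\ref{lem: gkey} by precisely this rescaling, replacing $F$ by a set $F_{\varepsilon}$ of at least $n\varepsilon/4$ unlabeled agents and the per-interaction success probability $\frac{1}{4n}$ by $\frac{\varepsilon}{4n}$, so that over $cn\log n/\varepsilon$ interactions the expected number of successes per root-to-leaf path is again $\frac{c}{4}\log n$ and the Chernoff-plus-union-bound argument of Lemma~\ref{lem: key} goes through unchanged. Your observation that the $1/\varepsilon$ inflation of the budget exactly cancels the $\varepsilon$ shrinkage of the success probability is the whole content of the generalization, and you have it right.
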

\begin{proof}
The proof is a generalization of that for
  Lemma \ref{lem: key}.
  Define $F_{\varepsilon}$ as a  set of 
at least $\varepsilon n/4$ agents without labels  that survive at least
$c n\log /\varepsilon $ interactions in the first phase.
Note that for an arbitrary agent, the probability of interaction
with a member in  $F_{\varepsilon}$ is at least $\frac {\varepsilon} {4n}$. 
The rest of the proof is analogous to that of Lemma \ref{lem: key}.
It is sufficient to replace $F$ by $F_{\varepsilon}$ and the
probability $\frac 1 {4n}$ of an interaction with a member
in $F$ with that  $\frac {\varepsilon} {4n}$ of an interaction
with a member in $F_{\varepsilon}$.
\qed \end{proof}

Having Lemma \ref{lem: gkey}, we can easily generalize
Lemma \ref{lem: second} to the following one.

\begin{lemma}\label{lem: gsecond}
  If the second phase starts after $c n\log n /\varepsilon $ interactions,
  where $c$ is the constant from Lemmata \ref{lem: key}, \ref{lem: gkey} ,
  then only $O(n\log n/ \varepsilon )$ interactions are needed
  to assign
  labels in $[1,(1+\varepsilon)n]$
  to the remaining agents without labels,
  w.h.p.
\end{lemma}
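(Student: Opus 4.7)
My plan is to mirror the proof of Lemma~\ref{lem: second}, but with the pool of ``givers''---leaf-agents of the partition tree carrying labels in $[1, n\varepsilon]$, which are the only agents eligible to issue an extra label $i + n$ in phase~2---playing the role of all leaf-agents in the original argument. The probability that an unlabeled agent meets such a giver in a single interaction will be $\Theta(\varepsilon)$ rather than $\Theta(1)$, which is the source of the extra $1/\varepsilon$ factor.

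The first step is to lower-bound the giver pool at the start of phase~2. By Lemma~\ref{lem: gkey}, at most $n\varepsilon/4$ agents are still unlabeled w.h.p. Each unlabeled agent must eventually be produced by a future split performed by some agent currently residing at an internal node of the full partition tree, and each such internal-node agent holds an interval of length at least one; hence the number of internal-node agents is also at most $n\varepsilon/4$. Consequently at least $n - n\varepsilon/2$ agents sit at leaves of the full tree, and since only $n - n\varepsilon$ such leaves carry labels outside $[1, n\varepsilon]$, at least $n\varepsilon/2$ leaf-agents must carry labels in $[1, n\varepsilon]$ and thus be givers.

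The second step is to argue that the pool stays $\Omega(n\varepsilon)$ for the whole of phase~2. A giver is consumed only when it gives its extra label to some unlabeled agent, and the total number of unlabeled agents at the start of phase~2 is at most $n\varepsilon/4$; hence the pool shrinks by at most $n\varepsilon/4$ and remains $\ge n\varepsilon/4$ throughout. Consequently, whenever an unlabeled agent is involved in an interaction, the probability that the partner is an available giver is at least $(n\varepsilon/4)/(n-1) = \Omega(\varepsilon)$.

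Finally, I will apply Fact~\ref{fact: part} with $C = \Theta(1/\varepsilon)$ and $\delta = 1/2$ to deduce that during $O(n\log n / \varepsilon)$ interactions each agent participates in $\Theta(\log n / \varepsilon)$ interactions w.h.p. The probability that a fixed unlabeled agent fails to meet any giver over that many trials is then at most $\bigl(1 - \Omega(\varepsilon)\bigr)^{\Theta(\log n / \varepsilon)} = n^{-\Omega(1)}$, and a union bound over the at most $n\varepsilon/4$ unlabeled agents finishes the proof. The main obstacle I anticipate is the counting in the first step: it requires relating three simultaneously evolving populations (unlabeled agents, internal-node agents, and leaf-givers with labels in $[1, n\varepsilon]$); the key identity that the number of internal-node agents is bounded by the sum of their interval lengths, which in turn equals the number of unlabeled agents, makes this bookkeeping clean.
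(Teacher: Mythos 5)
Your proof is correct and follows essentially the same route as the paper's: lower-bound the pool of leaf-agents with labels in $[1,n\varepsilon]$ by $\Omega(n\varepsilon)$, observe that it persists throughout the second phase because each giver is consumed at most once, and apply Fact~\ref{fact: part} with $C=\Theta(1/\varepsilon)$ (this is exactly the paper's stated alternative argument). Your first step is in fact slightly more careful than the paper's, which passes from at most $n\varepsilon/4$ unlabeled agents directly to at least $3n\varepsilon/4$ labeled agents in $[1,n\varepsilon]$ without explicitly discounting agents still at internal nodes; your bound of $n\varepsilon/2$ leaf-givers, obtained by charging internal-node agents to unassigned labels via the disjointness of their intervals, closes that small gap and still suffices for the conclusion.
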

\begin{proof}
The number of agents without labels at the beginning of the second
  phase is smaller than $\varepsilon n/4$ w.h.p.
  Hence, at the beginning of the second
  phase the number of agents with labels in the range
  $[1,\varepsilon n]$ is at least
  $\frac {3\varepsilon n}  {4}$ w.h.p. Recall that
  such an agent at a leaf of the tree can give a label
  to an agent without label
  only once. It follows that the number of agents
  with labels in $[1,\varepsilon n]$ that can give a label
  to an agent without label is always at least $\frac {\varepsilon n}
  {2}$ w.h.p. We conclude that for an
  agent without label the probability of an interaction with an agent
  that can give a label is
  at least almost $\frac {\varepsilon}{2n}$.
  Hence, after each $O(n/\varepsilon)$ interactions the expected number
  of agents without labels halves.
  It follows that the expected
  number of such interactions rounds is $O(\log n).$ Consequently,
  the number of the rounds is also $O(\log n)$ w.h.p. by 
  Fact \ref{fact: chernoff}.

  An alternative way to obtain the $O(n\log n /\varepsilon )$ bound on the number of
  interactions w.h.p.  is to use Fact \ref{fact: part}
  analogously as in the proof of Lemma \ref{lem: second}.
  The difference is that $C$ is set to $O(\frac 2 {\varepsilon})$ instead
  of $O(2)$ since the set of agents that can give a label
  is of size at least $\frac {n\varepsilon} 2$ now.
 \qed \end{proof}

We also need the following auxiliary lemma on
broadcasting constrained to a subset of agents.

\begin{lemma}\label{lem: broad}
  The leader can inform $\Theta (n\varepsilon)$ agents with labels not exceeding
  $O(n\varepsilon )$ about the phase transition using
  only these agents in $O(n \log n/ \varepsilon)$
  interactions.
\end{lemma}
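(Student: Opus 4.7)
The plan is to reduce the restricted broadcast to the standard one-way epidemic (Fact \ref{fact: broad}) by observing that only interactions between two members of the relevant subset $S$ (the leader and the agents with labels in $[1,O(n\varepsilon)]$, together with their ancestors in the partition tree) actually propagate the phase-transition message. By construction $|S| = \Theta (n\varepsilon)$, since there are $\Theta(n\varepsilon)$ leaves in $[1,n\varepsilon]$ and the set of ancestors of a contiguous prefix of leaves in the partition tree has size $O(n\varepsilon + \log n) = O(n\varepsilon)$ in the parameter regime $\varepsilon = \Omega(1/n)$.

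Next I would argue that the sub-sequence of interactions restricted to pairs in $S$ is distributionally the same as a sequence of interactions in a population of size $m := |S|$. Indeed, each interaction of the global process selects a pair of agents uniformly at random, so, conditioned on the event that both endpoints lie in $S$, the chosen pair is uniform over pairs in $S$; hence the induced process on $S$ is exactly the standard uniform interaction model on $m$ agents. By Fact \ref{fact: broad}, starting from the leader as the only informed agent, there is a constant $c_0$ such that $c_0 m \log m$ internal interactions complete the broadcast within $S$ w.h.p.

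It remains to translate a count of internal interactions into a count of total interactions. Each interaction is internal with probability $p = \binom{m}{2}/\binom{n}{2} = \Theta(\varepsilon^2)$, and these indicator variables are independent across global interactions. Letting $T = C n \log n /\varepsilon$ for a sufficiently large constant $C$, the expected number of internal interactions is $Tp = \Theta(C \varepsilon n \log n) = \Theta(C m \log n) = \Theta(C m \log m)$, since $\log m = \Theta(\log n)$ in the regime $\varepsilon = \Omega(1/n)$. A Chernoff lower bound (Fact \ref{fact: chernoff}) applied to this sum shows that the number of internal interactions in $T$ total interactions is at least $c_0 m\log m$ with probability at least $1 - n^{-\Omega(1)}$, since the expectation exceeds $c_0 m\log m$ by a constant factor and grows at least logarithmically in $n$.

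Combining the two high-probability events via Fact \ref{fact: bound} gives that after $O(n \log n /\varepsilon)$ total interactions all members of $S$ have received the phase-transition message w.h.p. The main subtlety is the first step — verifying that the induced process on $S$ really is distributed as uniform interactions on $m$ agents — which is a straightforward consequence of the uniformity of the global scheduler but must be invoked explicitly in order to legitimately apply Fact \ref{fact: broad} with parameter $m$ rather than $n$.
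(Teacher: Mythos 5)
Your proof is correct in substance but takes a genuinely different route from the paper. The paper analyzes the restricted broadcast directly: it argues that the number of informed agents doubles every $O(n/\varepsilon)$ interactions until $\Omega(n\varepsilon)$ agents are informed, then that the number of uninformed members of the subset halves every $O(n/\varepsilon)$ interactions, and it obtains concentration either via Chernoff bounds or by reusing the partition-tree argument of Lemmata~\ref{lem: key} and~\ref{lem: gkey} together with Fact~\ref{fact: part} with $C=O(\varepsilon^{-1})$. You instead reduce to the unrestricted epidemic: you observe that, conditioned on both endpoints lying in the relevant set $S$ of size $m=\Theta(n\varepsilon)$, the scheduler induces exactly the uniform interaction model on $m$ agents, invoke Fact~\ref{fact: broad} as a black box for that subpopulation, and then use a thinning/Chernoff argument to show that $O(n\log n/\varepsilon)$ global interactions contain enough internal ones. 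Your modular reduction is cleaner and avoids redoing the epidemic analysis; the paper's direct argument avoids the one issue your reduction inherits, described next.

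The caveat: Fact~\ref{fact: broad} applied to a population of size $m$ gives failure probability $m^{-\Theta(c)}$, which is high probability \emph{in $m$}, not in $n$. Since the paper allows $\varepsilon$ as small as $O(n^{-1})$, $m$ can be $O(1)$ or polylogarithmic, in which case $m^{-\Theta(c)}$ is not $n^{-\Omega(1)}$ for constant $c$, and your claim that $\log m=\Theta(\log n)$ whenever $\varepsilon=\Omega(1/n)$ is false in that regime (it holds only for $\varepsilon\ge n^{-\beta}$ with constant $\beta<1$). The fix is easy and stays within your budget: demand $\Theta(m\log n)$ internal interactions rather than $\Theta(m\log m)$ (equivalently, take $c=\Theta(\log n/\log m)$ in Fact~\ref{fact: broad}), which still translates to $\Theta(m\log n)\cdot\Theta(n^2/m^2)=\Theta(n\log n/\varepsilon)$ total interactions. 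With that adjustment, and noting that the claim $|S|=O(n\varepsilon)$ for the ancestor set is the paper's own assertion rather than something you need to reprove here, your argument goes through for the full parameter range.
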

\begin{proof}
During the initial part of the broadcasting process, after every
$O(n/\varepsilon )$ interactions, the expected number of agents
participating in the broadcasting process doubles. Hence, after
$O(n\log  /\varepsilon )$ interactions, the expected number of informed
agents will be $\Omega (n\varepsilon)$.  Then, the expected number of
uninformed agents will be halved for every $O(n/\varepsilon)$
interactions. So the expected number of rounds, each consisting of
$O(n/\varepsilon)$ interactions, needed to complete the broadcasting is
$O(\log n).$ It remains to turn the latter bound to a w.h.p. one.
This can be done by using the Chernoff bounds (Fact \ref{fact: chernoff}).

Alternatively, we can define for the purpose of the analysis
of the doubling part, a binary broadcast tree.
An informed
agent at an intermediate
node of the tree after
an interaction with an uninformed agent
moves to a child of the node while the other agent
now informed places at the other child (cf. the partition
tree in the proofs of Lemmata \ref{lem: key},
\ref{lem: gkey}).
Then, we can
use the technique from the proofs of Lemmata \ref{lem: key},
\ref{lem: gkey} to show
that only $O(n\log n/\varepsilon)$interactions are required w.h.p.
to achieve a configuration where only a constant fraction
of the agents participating in the broadcasting is uninformed.
To derive the same asymptotic upper bound on the number
of interactions required by the halving part w.h.p.,
we can use Fact \ref{fact: part} with $C=O(\varepsilon^{-1})$
analogously as in the proofs of Lemmata \ref{lem: second}, \ref{lem: gsecond}.
 \qed \end{proof}

The proof of the following theorem is analogous to that
of Theorem \ref{theo: 2n} with Lemmata \ref{lem: key},
\ref{lem: second} replaced by Lemmata \ref{lem: gkey}, \ref{lem: gsecond}.

\begin{theorem}\label{theo: epsilon}
  Let $\varepsilon >0.$
  There is a silent w.h.p. and
  safe protocol
for a population of $n$ agents that assigns
unique labels in the range $[1,(1+\varepsilon)n]$ to $n$ agents equipped
with $(2+\varepsilon)n+O(n^c)$ states, for any positive $c<1,$
in $O(n\log n /\varepsilon )$ interactions w.h.p.
\end{theorem}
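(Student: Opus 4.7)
The plan is to follow the two-phase template of Theorem \ref{theo: 2n}, with Lemmata \ref{lem: key} and \ref{lem: second} replaced by their generalisations Lemmata \ref{lem: gkey} and \ref{lem: gsecond}, and with the restricted broadcast of Lemma \ref{lem: broad} used in place of the full broadcast of Fact \ref{fact: broad}. First I would run the leader-election and phase-clock preprocessing of Fact \ref{fact: fastleader} (together with the clock of \cite{GS20}), costing $O(n\log n)$ interactions and $O(n^c)$ states, so as to obtain a unique leader w.h.p.\ and synchronise the start of Phase~1 of the proper labeling protocol.

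Phase 1 itself is identical to that of the $[1,2n]$ protocol: the leader starts with label $1$ and interval $[2,n]$, and labels/sub-intervals are dispersed down the binary partition tree of $[1,n]$. Lemma \ref{lem: gkey} now guarantees that after $O(n\log n/\varepsilon)$ interactions fewer than $n\varepsilon/4$ agents remain unlabeled, w.h.p. The leader counts its own interactions using $O(\log(n/\varepsilon))$ states; by Fact \ref{fact: part}, once the leader has logged an appropriate $\Theta(\log n)$ interactions the global count has crossed the required threshold w.h.p., and the leader triggers Phase 2.

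The phase-transition message is broadcast only to the $\Theta(n\varepsilon)$ agents with labels in $[1,n\varepsilon]$ and to the $O(n\varepsilon+\log n)$ intermediate-node agents that are their ancestors in the partition tree; by Lemma \ref{lem: broad} this finishes in $O(n\log n/\varepsilon)$ interactions w.h.p. Once an informed small-label agent (label $i\in[1,n\varepsilon]$) meets an unlabeled agent, it hands over the label $i+n$. Lemma \ref{lem: gsecond} then bounds the number of further interactions needed to label every remaining free agent by $O(n\log n/\varepsilon)$ w.h.p. Uniqueness goes through exactly as in Lemma \ref{lem: unique}: Phase 1 partitions $[1,n]$ disjointly, and each fresh label $i+n$ in Phase 2 is produced only by the unique owner of $i$ and only once. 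Safety is immediate (no label is ever overwritten) and can be made to hold with probability $1$ by the same trick as in the proof of Theorem \ref{theo: 2n}; silence holds w.h.p.\ because every agent eventually settles in a stable state. A single union bound over the preprocessing, the Phase~1 threshold, the restricted broadcast, and the Phase 2 dispersal yields the overall w.h.p.\ bound of $O(n\log n/\varepsilon)$ interactions.

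For the state count I would associate one state per interval in the partition tree (at most $2n-1$ states for the $n$ leaves together with the $n-1$ intermediate nodes), plus a ``message-received'' flag-state for each intermediate node that is an ancestor of some small-label leaf (an additional $O(n\varepsilon+\log n)$ states by the observation made just above Lemma \ref{lem: gkey}), plus ``message-received'' and ``label-passed'' flag-states for each of the $\lceil n\varepsilon\rceil$ small-label leaves (contributing $O(n\varepsilon)$ states after compressing the redundant combinations), plus one state per fresh label in $[n+1,(1+\varepsilon)n]$ (contributing $\varepsilon n$ states). Together with the $O(\log(n/\varepsilon))$ leader counter and the $O(n^c)$ leader-election overhead this stays within $(2+\varepsilon)n+O(n^c)$. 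The main obstacle is aligning the leader's trigger, the restricted broadcast and the Phase 2 dispersal so that all three events lie on a common w.h.p.\ event: we need the broadcast to finish while a sufficient pool of small-label distributors is still available, yet only \emph{after} the unlabeled population has dropped below $n\varepsilon/4$. This is exactly the regime Lemmata \ref{lem: gkey}, \ref{lem: broad} and \ref{lem: gsecond} are crafted for, so the composition goes through by a single union bound.
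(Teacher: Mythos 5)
Your proposal is correct and follows essentially the same route as the paper's proof: the same two-phase template with Lemmata \ref{lem: gkey}, \ref{lem: gsecond} and the restricted broadcast of Lemma \ref{lem: broad}, the same leader-election/phase-clock synchronisation, and essentially the same state accounting. Two small points the paper handles explicitly that you gloss over: the leader's interaction counter naively needs $O((\log n)/\varepsilon)$ states (not $O(\log(n/\varepsilon))$), which the paper reduces to $O(\log n)$ by having the leader count only interactions with small-label agents; and the $(2+O(\varepsilon))n$ total is converted to $(2+\varepsilon)n$ by rerunning the protocol with a smaller $\varepsilon'=\Omega(\varepsilon)$.
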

  \begin{proof}
Under the assumption that the leader election preprocessing
    provides a unique leader, the correctness of the label assignment
    in both phases w.h.p.  and the fulfillment of the definition of a
    silent and safe protocol follow from Lemmata \ref{lem: gkey},
    \ref{lem: gsecond}, and \ref{lem: broad} by the
    same arguments as in the proof of Theorem \ref{theo: 2n}.

  The leader election preprocessing and its synchronization with
   the proper labeling protocol
  require $O(n\log n)$ interactions and 
  $o(n^c)$ states, for $c<1,$ w.h.p. as described in the proof of Theorem \ref{theo: 2n}.
  Analogously, it follows that w.h.p. the whole protocol provides
  a valid  labeling, it is silent and safe. Again, it can be transformed to a safe
  protocol by prohibiting agents to change or get rid of an assigned label.
  
  By Lemmata \ref{lem: gkey}, \ref{lem: gsecond},
  both phases require $O(n\log n  /\varepsilon )$ interactions w.h.p.
  The broadcasting about the phase transition starts when
  the number of agents without labels in the first phase drops
  below $n\varepsilon/4$ w.h.p. By Lemma \ref{lem: broad},
  it requires $O(n\log n/\varepsilon)$ interactions w.h.p. since only
  the $\Theta (n\varepsilon)$ agents in states corresponding to labels in
  $[1,n\varepsilon]$ are involved in it.
  
  The estimation of the number of needed states is more subtle than in
  Theorem \ref{theo: 2n}. With each intermediate node of the interval
  partition tree that does not correspond
  to a label in $[1,n\varepsilon]$
  (equivalently, that is not an ancestor
  of a leaf corresponding to a label in
  $[1,n\varepsilon]$), we associate a single state.
  (Recall here that if an agent at an intermediate
  node of the tree encounters an agent without
  label then the former agent moves to the left
  child of the node.)
      With each intermediate node corresponding
    to a label in $[1,n\varepsilon],$ we associate
    two states. They indicate whether or not
    the agent at the node has already got the message about
    phase transition. Next, with each leaf  of the tree
    corresponding to a label $i$ in $[1,n\varepsilon ],$
    we associate four states. They indicate whether
    or not the agent at the leaf has already got the message
    about the phase transition, and whether or not
    the agent has already passed the label $i+n$ to some
    agent without label, respectively.
  To each of the remaining
 leaves, we associate only a single state.

  We also need $O(\log n/\varepsilon)$ additional states for the leader to count the
  number of own interactions in order to start broadcasting the message on
  transition to phase two at a right time step. In fact, we can get
  rid of the $O(\frac 1 {\varepsilon})$ factor here by letting the leader
  to count approximately each $\Theta (1/e)$ interaction. Simply, the
  leader can count only interactions with agents which have got labels
  not exceeding $O(\varepsilon n).$

  Finally, we have $n\varepsilon$ states corresponding to the labels in $[n+1,(1+\varepsilon)n].$ 
   Thus,
   totally only $(2+O(\varepsilon))n+ O(n^c)$ states, for any positive
   $c<1,$ are
    sufficient. To get rid of the constant factor at $\varepsilon,$ it
    is sufficient to run the protocol for a smaller $\varepsilon'=\Omega
    (\varepsilon).$ It does not change the asymptotic upper bound on the
    number of required interactions w.h.p. and even it decreases the
    range of the labels.
 \qed \end{proof}

Note that $\varepsilon$ in Theorem \ref{theo: epsilon}
does not have to be a constant; it can even be as small
as $O(n^{-1}).$

By combining the protocol of Theorem \ref{theo: epsilon}
with that of  Berenbrink et al. for exact counting the population
size (Fact \ref{fact: exact}), we obtain
the following corollary on unique labeling when
the population size is unknown to agents initially.
The proof is analogous to that of Corollary \ref{cor: 2n}.

\begin{corollary}\label{cor: eps}
  Let $\varepsilon >0.$
  There is a
    protocol
for a population
of $n$ agents that assigns
unique labels in the range $[1,(1+\varepsilon)n]$ to the agents
initially not knowing the number $n,$ equipped
with $\tilde{O}(n)$ states
in $O(n\log n /\varepsilon )$ interactions w.h.p.
\end{corollary}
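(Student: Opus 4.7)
The plan is to reduce to Theorem~\ref{theo: epsilon} by first computing $n$ exactly, much as Corollary~\ref{cor: 2n} reduces to Theorem~\ref{theo: 2n}. Concretely, I would run, as a preprocessing phase, the exact population-size protocol of Berenbrink et al.\ (Fact~\ref{fact: exact}), which w.h.p.\ stabilizes after $O(n\log n)$ interactions so that every agent holds the exact value $n$, and uses $\tilde{O}(n)$ states. Since this protocol already elects a distinguished agent as part of its operation, I would designate that agent as the leader required by the proper labeling protocol of Theorem~\ref{theo: epsilon}, thereby avoiding a separate leader-election preprocessing.

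Next, once an agent learns $n$, it can set up the state space $(2+\varepsilon)n + O(n^c)$ of Theorem~\ref{theo: epsilon} (encoding positions on the interval partition tree together with the extra label slots from $[n+1,(1+\varepsilon)n]$), which fits into the $\tilde{O}(n)$ budget. The leader then launches the two-phase labeling protocol of Theorem~\ref{theo: epsilon}, whose cost is $O(n\log n/\varepsilon)$ interactions w.h.p.\ and which, by that theorem, produces unique labels in $[1,(1+\varepsilon)n]$, is silent w.h.p., and is safe.

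The only delicate point is synchronization: the labeling protocol must not start in any agent before the counting protocol has stabilized, since agents need the correct value of $n$ to compute the tree structure and the thresholds used by the leader. I would handle this exactly as in the proofs of Theorem~\ref{theo: 2n} and Corollary~\ref{cor: 2n}: let the leader (elected by the counting protocol) wait until the counting protocol has run for the appropriate $\Theta(n\log n)$ interactions, which it verifies by counting its own interactions up to $\Theta(\log n)$ via Fact~\ref{fact: part}, and then broadcast a start signal to all other agents. By Fact~\ref{fact: broad} this broadcast completes in $O(n\log n)$ additional interactions w.h.p.\ and uses only $O(\log n)$ additional states. Summing the three contributions (counting, broadcast, labeling) yields $O(n\log n/\varepsilon)$ interactions w.h.p.\ and $\tilde{O}(n)$ states overall, as required.

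The main obstacle is a conceptual rather than a technical one: the counting protocol is not safe (agents update their estimates until the protocol stabilizes), so one must make sure that the interval-partition machinery of Theorem~\ref{theo: epsilon} only becomes active after the estimate has converged; this is precisely what the leader-driven broadcast achieves. Once that synchronization is in place, all the bounds follow routinely by the union bound over the three w.h.p.\ events (counting stabilization, broadcast completion, and correctness of the labeling protocol).
\qed
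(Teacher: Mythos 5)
Your proposal matches the paper's own argument: the paper proves Corollary~\ref{cor: eps} by declaring it analogous to Corollary~\ref{cor: 2n}, whose proof likewise runs the exact-counting protocol of Fact~\ref{fact: exact} first, reuses the leader it elects, then launches the labeling protocol (here that of Theorem~\ref{theo: epsilon}), synchronizing the stages with the same leader-driven interaction-counting and broadcast mechanism. Your additional remarks on state budgeting and the union bound are consistent elaborations of the same route, so no substantive difference.
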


\section{State- and range-optimal labeling}
In this section we propose and analyze
state-optimal
protocols,
which  are silent and safe once
a unique leader is elected, and 
utilize labels from the 
smallest possible range $[1,n].$
We assume
the number of agents $n$ 
to be known.
We propose such a labeling protocol {\em Single-Cycle} 
which utilizes $n+5\sqrt{n}+O(n^c)$ states, for any positive $c<1,$ and
the expected number
of interactions required by the protocol is $O(n^3).$
We show in Section 5  that any silent and safe labeling protocol
 requires $n+\sqrt{\frac {n-1}2}-1$ states, 
see Theorem \ref{theo: ssharp}. Thus,
our protocol is almost state-optimal.
Finally, we propose a partial 
parallelization of {\em Single-Cycle} protocol
called {\em $k$-Cycle} protocol which utilizes 
$(1+\varepsilon)n$ states and $O((n/\varepsilon)^2)$ interactions 
for $\varepsilon=\Omega(n^{-1/2}).$ 

\subsection{Labeling protocol}

The state efficient labeling protocol
starts from a preprocessing electing a unique leader.
Its main idea is to use two agents: 
the initial leader $A$ 
and a nominated (by $A$) agent $B,$ 
as partial {\em label dispensers}.
These two agents jointly dispense unique 
labels for the remaining {\em free} (non-labeled yet) 
agents in the population where 
agent $A$ dispenses the first and 
agent $B$ the second part of each individual label.
For the simplicity of presentation, we assume
that $n$ is a square of some integer.
During execution of the protocol 
agent $A$ uses partial labels 
${\tt label}(a)\in \{0,\dots,\sqrt n-1\}$ 
and $B$ uses partial labels 
${\tt label}(b)\in\{1,\dots,\sqrt n\}.$
The two dispensers label every agent by a unique pair of 
partial labels $({\tt label}(a),{\tt label}(b))$
where the combination $(i,j)$ is interpreted
as the integer label $i\cdot\sqrt{n}+j.$
The protocol first labels all {\em free} (different
to dispensers unlabeled) agents and eventually gives labels 
$(0,2)$ to agent $B$ and $(0,1)$ to agent  $A.$

In a nutshell, the labeling process is based on a single cycle of
interactions between dispensers $A$ and $B$ and the free agents. 
Agent $A$ awaits an interaction with a free agent $F$
when $A$ dispenses to $F$ its current partial label ${\tt label}(a).$ Now $F$ awaits an interaction with $B$ in order to receive the second part of its label. And when this happens agent $F$ concludes with the combined label and agent $B$
awaits an interaction with $A$ to inform that the next free agent needs to be labeled. On the conclusion of this interaction if ${\tt label}(b)>1$ agent $B$ adopts new partial label ${\tt label}(b)-1,$ otherwise $B$ adopts 
${\tt label}(b)=\sqrt n$ and agent $A$ adopts new label
${\tt label}(a)-1.$ The only exception is when 
${\tt label}(a)=0$ and ${\tt label}(b)=2$ when 
agent $B$ adopts label $(0,2)$ and agent $A$ adopts label $(0,1)$ and both agents conclude the labeling process.
The state utilization and transition function
in the labeling protocol are specified as follows.
\par
\noindent
{\bf State utilization in} {\em Single-Cycle} {\bf protocol}

\vspace{0.2cm}
{\bf [Agent $A$]} Since ${\tt label}(a)\in\{0,\dots\sqrt n-1\}$
dispenser $A$ utilizes $2\cdot\sqrt n+2$ states including:
\begin{itemize} 
	\item $A.{\tt init}=(1)$ 
	{\small\bf  \#\ the initial (leadership) state of dispenser $A,$} 
	\item $A[{\tt label}(a),{\tt await}(F)]$ 
	{\small\bf  \#\ dispenser $A$ carrying partial label ${\tt label}(a)$ awaits interaction with a free agent $F,$}
	\item $A[{\tt label}(a),{\tt await}(B)]$ 
	{\small\bf  \#\ dispenser $A$ carrying partial label ${\tt label}(a)$ awaits interaction with dispenser $B,$}
	\item $A.{\tt final}=(0,1)$
	{\small\bf  \#\ the final state of $A.$}
\end{itemize}

{\bf [Agent $B$]} Since ${\tt label}(b)\in\{0,\dots\sqrt n\}$
dispenser $B$ utilizes  $2\cdot\sqrt n+3$ states including:
\begin{itemize} 
	\item $B[{\tt label}(b),{\tt await}(F)]$ 
	{\small\bf  \#\ dispenser $B$ carrying partial label ${\tt label}(b)$ awaits interaction with a free agent $F,$}
	\item $B[{\tt label}(b),{\tt await}(A)]$ 
	{\small\bf  \#\ dispenser $B$ carrying partial label ${\tt label}(b)$ awaits interaction with dispenser $A$}
	\item $B.{\tt final}=(0,2)$
	{\small\bf  \#\ the final state of $B.$}
\end{itemize}

{\bf [Agent $F$]} Since free agents carry 
partial labels ${\tt label}(a)\in\{0,\dots\sqrt n-1\}$ 
and eventually adopt one of the $n-2$ destination labels (excluding dispensers)
they utilize $n+\sqrt n-1$ states including:
\begin{itemize} 
	\item $F.{\tt init}=(0)$ 
	{\small\bf  \#\ the initial (non-leader) state of $F$} 
	\item $F[{\tt label}(a),{\tt await}(B)]$ 
	{\small\bf  \#\ free agent $F$ carrying partial label ${\tt label}(a)$ awaits interaction with dispenser $B,$}
	\item $F.{\tt final}=({\tt label}(a),{\tt label}(b))$
	{\small\bf  \#\ the final state of $F.$}
\end{itemize}
In total {\em Single-Cycle} protocol requires $n+5\cdot\sqrt n+4$ states.

\vspace*{0.3cm}
\noindent
{\bf Transition function in} {\em Single-Cycle} {\bf protocol}

\vspace*{0.2cm}
\noindent
{\bf [Step 0] Initialization}
During the first interaction of $A$ with a free agent the second dispenser $B$ is nominated.
Both dispensers adopt their largest labels.
Agent $A$ awaits a free agent in the initial state
while agent $B$ awaits a free agent carrying a partial 
label obtained from $A.$ 
\begin{itemize}
	\item $(A.{\tt init},F.{\tt init})\\
	\hspace*{1cm}\rightarrow (A[{\tt label}(a)=\sqrt n-1,{\tt await}(F)],
	B[{\tt label}(b)=\sqrt n,{\tt await}(F)]),$
	\\
\end{itemize}
\vspace*{-0.2cm}
The three steps $C_1, C_2,$ and $C_3$ of the labeling cycle are given below.

\noindent
{\bf [Step $C_1$] 
	Agent $A$ dispenses partial label}
During an interaction of agent $A$ with a free agent $F$
the current partial label ${\tt label}(a)$ is dispensed to $F$.
Both agents await interactions with dispenser $B$ which is
ready to interact with partially labeled $F$ but not $A.$
\begin{itemize}
	\item $(A[{\tt label}(a),{\tt await}(F)],
	F.{\tt init})\\
	\hspace*{1cm}\rightarrow(A[{\tt label}(a),{\tt await}(B)],
	F[{\tt label}(a),{\tt await}(B)])$ 
	{\small\bf  \#\ Go to Step $C_2$}
\end{itemize}
{\bf [Step $C_2$] 
	Agent $B$ dispenses partial label}
During an interaction of agent $B$ with a free agent $F$ which carries partial label ${\tt label}(a)$, the complementary current partial label ${\tt label}(b)$ is dispensed to $F$.
Agent $F$ concludes in the final state 
with the combined label $({\tt label}(a),{\tt label}(b)).$
Agent $B$ is now ready for interaction with $A.$
\begin{itemize}
	\item $(B[{\tt label}(b),{\tt await}(F)],
	F[{\tt label}(a),{\tt await}(B)])\\
	\hspace*{1cm}\rightarrow(B[{\tt label}(b),{\tt await}(A)],
	F.{\tt final}=({\tt label}(a),{\tt label}(b)))$
	{\small\bf  \#\ Go to Step $C_3$}
\end{itemize}
{\bf [Step $C_3$] 
	Agent $A$ and $B$ negotiate a new label or conclude}
In the case when ${\tt label}(a)=0$ and ${\tt label}(b)=2$ the dispensers $A$ and $B$
conclude in states $(0,1)$ and $(0,2)$ respectively,
see the first transition. 
Otherwise a new combination of partial labels is agreed
and the protocol goes back to Step $C_1.$
\begin{itemize}
	\item $(A[{\tt label}(a)=0,{\tt await}(B)],
	B[{\tt label}(b)=2,{\tt await}(A)])$\\
	\hspace*{1cm}$\rightarrow (A.{\tt final}=(0,1),
	B.{\tt final}=(0,2))$  
	{\small\bf  \#\ Conclude the labeling process}
	\item $(A[{\tt label}(a)=0,{\tt await}(B)],
	B[{\tt label}(b)>2,{\tt await}(A)])$ {\bf or}\\ 
	$(A[{\tt label}(a)>0,{\tt await}(B)],
	B[{\tt label}(b)>1,{\tt await}(A)])$\\
	\hspace*{1cm}$\rightarrow (A[{\tt label}(a),{\tt await}(F)], B[{\tt label}(b)-1,{\tt await}(F)])$  
	{\small\bf  \#\ Go to Step $C_1$}
	\item $(A[{\tt label}(a)>0,{\tt await}(B)],
	B[{\tt label}(b)=1,{\tt await}(A)])$\\
	\hspace*{1cm}$\rightarrow (A[{\tt label}(a)-1,{\tt await}(F)],
	B[{\tt label}(b)=\sqrt{n},{\tt await}(F)])$
	{\small\bf  \#\ Go to Step $C_1$}
\end{itemize}

\begin{theorem}\label{theo: scycle}
  {\em Single-cycle} utilizes $n+5\cdot\sqrt n+O(n^c)$ states, for any positive $c<1,$
  and the minimal label range $[1,n].$ The expected
  number of interactions required by the protocol is $O(n^3)$.
  Once a unique leader is elected, it produces a valid labeling
  of the $n$ agents and it is silent and safe.
\end{theorem}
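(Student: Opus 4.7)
The plan is to verify each of the five assertions of the theorem in turn, with the expected-interactions bound being the only genuinely probabilistic claim.

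For the state count and the label range, I would simply tally the states enumerated in the specification: $2\sqrt n+2$ for dispenser $A$, $2\sqrt n+3$ for dispenser $B$, and $n+\sqrt n-1$ for the free agents, giving $n+5\sqrt n+4$, to which Fact~\ref{fact: fastleader} adds $O(n^c)$ states (for any $c<1$) from the leader-election preprocessing and its synchronization with the proper protocol, exactly as in the proof of Theorem~\ref{theo: 2n}. For the range, I would observe that the pairs $(i,j)\in\{0,\dots,\sqrt n-1\}\times\{1,\dots,\sqrt n\}$ biject with $[1,n]$ under the map $(i,j)\mapsto i\sqrt n+j$, so all labels lie in $[1,n]$.

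For correctness, assuming a unique leader from the preprocessing, I would argue by induction on the number of completed cycles that the pair $({\tt label}(a),{\tt label}(b))$ stored by the dispensers at the start of the $k$-th cycle traverses the list $(\sqrt n-1,\sqrt n), (\sqrt n-1,\sqrt n-1),\dots,(\sqrt n-1,1),(\sqrt n-2,\sqrt n),\dots,(0,2)$ in order, and that each cycle assigns this pair to exactly one fresh free agent via Steps~$C_1$ and~$C_2$, while Step~$C_3$ either decrements the pair or triggers termination. After $n-2$ cycles the only remaining untriggered rule is the terminating transition, which places $A$ in state $(0,1)$ and $B$ in state $(0,2)$. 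Thus every agent reaches a final state carrying a distinct label in $[1,n]$; the protocol is silent because no transition is enabled from the resulting configuration, and safe because the final-state transitions are the only ones that ever assign a label and no later rule can overwrite it.

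For the expected number of interactions, I would analyze one cycle at a time. Each of Steps~$C_1$, $C_2$, $C_3$ waits for an interaction between two specifically designated agents (in $C_1$, $A$ and some free agent; in $C_2$, $B$ and the just-labelled agent; in $C_3$, $A$ and $B$); in the probabilistic scheduler the wait for any specified pair has expectation $\Theta(n^2)$, so each cycle contributes $O(n^2)$ expected interactions. Since there are at most $n-2$ cycles, summing gives $O(n^3)$; the additional $O(n\log n)$ from leader election is absorbed. The main subtlety here is just to note that Step~$C_1$ can only get cheaper (not more expensive) when many free agents remain, so the worst-case $\Theta(n^2)$ bound per cycle is uniform and linearity of expectation yields the $O(n^3)$ total. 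I do not expect any serious obstacle beyond this bookkeeping, since no concentration is claimed and the analysis reduces to waiting times for independent rendezvous events.
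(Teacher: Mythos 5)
Your proposal is correct and follows essentially the same route as the paper's own proof: tally the $n+5\sqrt n+4$ states from the specification plus $O(n^c)$ for leader election as in Theorem~\ref{theo: 2n}, observe the range $[1,n]$ from the encoding $(i,j)\mapsto i\sqrt n+j$, argue correctness/silence/safety from the sequential nested-loop structure of the $n-2$ cycles, and bound each cycle by $O(n^2)$ expected interactions (a rendezvous wait for a designated pair) to get $O(n^3)$ by linearity of expectation. You in fact supply more detail than the paper, which asserts the $O(n^2)$-per-iteration bound without the explicit rendezvous argument.
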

\begin{proof}
  Assume that the leader election preprocessing
  provides a unique leader.
  Then, the protocol is silent and safe by its definition.
  All ll labels 
	are dispensed in the sequential manner
	and the labeling process concludes when the two 
	dispensers finalize their own labels.
	In particular, as soon as 
	the two dispensers $A$ and $B$ are established 
	they operate in a short cycle formed of steps $C1,C2$ and $C3$
	labeling one by one all free agents in the population.
	One can observe that the sequence of cycles mimics the structure 
	of two nested loops where the external loop iterates along
	the partial labels of $A$ and the internal one along
	partial labels of $B$. In total, we have $n-2$ iterations
	where
               the expected number of interactions required by each iteration is $O(n^2).$
	Thus one can conclude that
                the expected number of interactions required by 
	the whole labeling process but for the leader election preprocessing is $O(n^3).$ 
        By the definition
        of the protocol the range of assigned labels is $[1,n].$
	Finally, as indicated earlier in this section 
	the number of states utilized 
	by the protocol
        but for the leader election preprocessing is equal to $n+5\cdot\sqrt n+4$.

The leader election preprocessing and its synchronization with
   the proper labeling protocol
  require additional $O(n\log n)$ interactions and additional
  $o(n^c)$ states, for $c<1,$ w.h.p. as described in the proof of Theorem \ref{theo: 2n}.
  \qed \end{proof}
Observe that when the exact value of $n$ is embedded in
the transition function on the conclusion all agents 
become dormant, i.e., they stop participating 
in the labeling process.
One could redesign the protocol such that 
the labels are dispensed by $A$ and $B$ in the increasing order
using a diagonal method,
e.g., $(0,0),$ $(0,1),$ $(1,0),$
$(0,2),$ $(1,1),$ $(2,0),$ $(0,3),$ $(1,2),$ $(2,1),$ $(3,0)$ etc.,
where agent $A$ gets label $(0,0),$
agent $B$ gets label $(0,1),$
the first labeled free agent gets $(1,0),$
the second $(0,2),$ then $(1,1)$ and $(2,0),$
when $A$ and $B$ start using the next diagonal, etc.
Each pair $(i,j)$ is interpreted as $(i+j)(i+j+1)/2 + i$,
e.g., $(0,1)=1,$
$(0,2)=3,$ $(0,3)=6$ and in general $(0,j) = j(j+1)/2,$
$(1,j-1)=j(j+1)/2+1,$
$(1,j-2)=j(j+1)/2+2,$....,$(j,0)=j(j+1)/2+j$
$=(j+1)(j+2)/2-1= (0,j+1)-1.$
In this case the size of the population does not 
need to be known in advance, however, 
the two dispensers will never stop searching
for free agents yet to be labeled.

\subsection{Faster Labeling}

We observe that one can partially parallelize 
{\em Single-Cycle} protocol by
instructing leader $A$ to form $k$ pairs of dispensers where each pair labels agents in a distinct 
range of size $n/k.$
In such case the new {\em $k$-cycle} protocol 
requires extra $2k$ states
to allow leader $A$ initialize the labeling process 
(create two dispensers) in all $k$ cycles. 
Thus the total number of states is bounded by 
$n+2k+k\cdot(5\sqrt{n/k}+4)=n+6k+5k\cdot\sqrt{n/k}
<n+6(k+\sqrt{nk})<n+12\sqrt{nk},$ as $k<\sqrt{nk}$,
plus the number of states required by the leader election preprocessing.
We use the same method for the leader election
preprocessing and its synchronization
with the proper labeling protocol 
described in the proof of Theorem \ref{theo: 2n}. Analogously, it adds 
$O(n\log n)$ interactions and $O(n^c)$ states,
for any positive $c<1.$
As we need to pick $k$ for which $n+12\sqrt{nk}\le n+n\varepsilon$ we
conclude that $k\le n\varepsilon^2/144.$

One can show that for $k=n\varepsilon^2/144,$
the expected number of interactions required by the
{\em $k$-cycle} protocol is $O(n^2/\varepsilon^2).$
Note that in order to initialize $k$ cycles the leader $A$
has to communicate with $2k-1$ free agents. As $k$ is at most
a small
fraction of $n$ during the search for dispensers for each cycle
the number of free agents is always greater than $n/2$ 
(in fact it is very close to $n$). Thus the probability of forming a new dispenser during any interaction is greater than $1/2n$, i.e.,
the product of the probability $1/n$ that the random scheduler
selects leader $A$ as the initiator, times 
the probability greater than $1/2$ that the responder is a free agent.
In order to finish the initialization, we need to create new 
dispensers $2k-1$ times. Using Chernoff bound, we observe that after $O(kn)=O(n^2/\varepsilon^2)$ interactions all $k$ cycles have 
their two dispensers formed.
As each cycle dispenses $n/k=144/\varepsilon^2$ labels and the expected
number of interactions required
to dispense a single label is $O(n^2)$ with high probability,
the expected number of interactions
required by
a specific cycle to generate all labels is $O(n^2/\varepsilon^2)$ also with high probability.
As observed earlier,
the leader election preprocessing adds only $O(n\log n)$
interactions w.h.p.
Hence, the expected number of interactions required to
conclude the labeling process is $O(n^2/\varepsilon^2)$.
Finally, note that 
for small values of $\varepsilon$ approaching $n^{-1/2}$ {\em $k$-cycle} protocol reduces
to {\em Single-cycle} protocol and for constant $\varepsilon$ the number of interactions 
required by the protocol is $O(n^2).$
\begin{theorem}\label{theo: kcycle}
For $k=n\varepsilon^2/144,$ where $\varepsilon=\Omega(n^{-1/2}),$ 
and the minimal label range $[1,n],$
the proposed
{\em $k$-cycle} labeling protocol provides a space-time trade-off
in which utilization of $(1+\varepsilon)n+O(\log \log n)$ states permits
the expected number of interactions $O(n^2/\varepsilon^2)$.
\end{theorem}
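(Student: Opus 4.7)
The plan is to mirror the analysis of \emph{Single-Cycle} from Theorem \ref{theo: scycle} while auditing the extra cost of spawning and running $k$ independent cycles in parallel. I would verify in turn (i) the state count, (ii) the time to initialize all $k$ dispenser pairs, (iii) the expected completion time of a single cycle, and (iv) how the cycles compose to the overall bound.

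For (i), the state accounting was already sketched in the paragraph preceding the theorem: each of the $k$ cycles consumes the $5\sqrt{n/k}+4$ dispenser states of \emph{Single-Cycle} plus two initialization states used by $A$ to spawn its dispenser pair, while the $n$ agent-label states are shared. Substituting $k = n\varepsilon^{2}/144$ into $n+2k+k(5\sqrt{n/k}+4)\le n+12\sqrt{nk}$ produces $(1+\varepsilon)n$; one then adds the $O(\log\log n)$ (or $O(n^c)$) states needed for the leader-election preprocessing and its synchronization with the proper labeling protocol, exactly as in the proof of Theorem \ref{theo: scycle}. For (ii), because almost all agents remain free during initialization, each interaction promotes a free agent to dispenser with probability $\ge 1/(2n)$, and a Chernoff bound (Fact \ref{fact: chernoff}) shows that the $2k-1 = O(n\varepsilon^{2})$ nominations finish within $O(kn)=O(n^{2}\varepsilon^{2})$ interactions w.h.p., which is dominated by the overall budget $O(n^{2}/\varepsilon^{2})$. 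For (iii), once the pair $(A_i,B_i)$ of cycle $i$ is established, each of the three transitions $C_1,C_2,C_3$ waits for a specific ordered pair of agents and hence takes expected $\Theta(n^{2})$ global interactions, exactly as in the single-cycle analysis; summing over the $n/k=144/\varepsilon^{2}$ labels of cycle $i$ gives expected cost $O(n^{3}/k)=O(n^{2}/\varepsilon^{2})$ for that cycle.

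For (iv), since the $k$ cycles use disjoint dispenser pairs and disjoint label subranges, the protocol terminates when the slowest cycle finishes, so the overall cost is the expected maximum rather than the sum of the per-cycle costs. I expect this to be the main obstacle: a naive union bound over cycles inflates the cost by a factor of $k$, so one has to strengthen the per-cycle estimate to a high-probability statement. The cleanest route is to apply Fact \ref{fact: chernoff} to the number of productive interactions inside each window of length $\Theta(n^{2}/\varepsilon^{2})$ of a fixed cycle, showing that every cycle finishes within such a window with probability at least $1-1/n^{2}$, and then union-bound over the $k\le n$ cycles. Finally, adding the $O(n\log n)$ interactions of the leader-election preprocessing (Fact \ref{fact: fastleader}) yields the claimed $O(n^{2}/\varepsilon^{2})$ expected total.
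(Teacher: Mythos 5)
Your proposal is correct and follows essentially the same route as the paper: the same state accounting $n+2k+k(5\sqrt{n/k}+4)\le n+12\sqrt{nk}$ forcing $k\le n\varepsilon^2/144$, the same $O(kn)$ initialization bound via the $\ge 1/(2n)$ per-interaction nomination probability, and the same per-cycle cost of $(n/k)\cdot O(n^2)=O(n^2/\varepsilon^2)$. Your step (iv) is in fact more careful than the paper, which simply asserts each cycle finishes in $O(n^2/\varepsilon^2)$ interactions ``with high probability'' and concludes; your explicit per-cycle high-probability bound followed by a union bound over the $k\le n$ cycles (plus the geometric tail needed to convert this into a bound on the expected maximum) is exactly what is needed to make that composition rigorous.
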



\section{Lower bounds}

In this chapter, we derive several lower bounds on the number of
states or the number of interactions required by silent, safe or
the
so-called pool protocols for unique labeling. Importantly, these lower
bounds also hold in our model assuming that the population size is
known to the agents initially and also when a unique leader is available
initially.

The following general lower bound valid for any range of labels
follows immediately from the definitions of a population
protocol and the problem of unique labeling, respectively.

\begin{theorem}\label{theo: trivial}
  The problem of assigning unique labels to $n$ agents
  requires $\Omega (n\log n)$ interactions w.h.p.
  and the agents have to be equipped with at least
  $n$ states.
\end{theorem}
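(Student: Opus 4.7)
The theorem bundles a trivial state bound with a standard coupon-collector interaction bound, so I would dispatch the two claims separately. For the state bound I would simply observe that in any terminal configuration produced by a valid labeling protocol the $n$ agents sit in $n$ pairwise distinct states --- the label of an agent has to be deducible from its state, so agents carrying different labels occupy different states. Hence the reachable-state set has cardinality at least $n$, which is the second half of the theorem.

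For the interaction bound my plan is to reduce to coupon collecting. In the population protocol model every agent starts in a common initial state, with at most one additional leader state, so any agent that is never scheduled in an interaction still carries its initial state at termination. If two or more agents are never scheduled, they share a state and therefore a label, contradicting validity; hence in any successful execution at least $n-1$ specified agents must have participated in some interaction. A fixed agent is scheduled (as initiator or responder) with probability $2/n$ per interaction, so the probability it is missed throughout $t=cn\log n$ interactions equals $(1-2/n)^t=\Theta(n^{-2c})$, and the expected number $E[U]$ of untouched agents is $\Theta(n^{1-2c})$, which tends to infinity as $n\to\infty$ whenever $c<1/2$.

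The remaining step is to promote this expectation bound into a w.h.p.\ statement. I would use the second moment of $U$: the events ``agent $i$ is untouched'' are pairwise negatively correlated, since conditioning on $i$ being avoided forces each subsequent pair to be drawn from a smaller population and thereby raises the per-interaction hit probability of any other agent. Consequently $\mathrm{Var}(U)\le E[U]$, and Chebyshev's inequality gives $U\ge E[U]/2$ with probability $1-o(1)$, so $U\ge 2$ w.h.p.\ and two of those untouched agents share the initial state, contradicting a valid labeling. Hence any valid labeling requires more than $cn\log n$ interactions with high probability. The one genuinely technical ingredient is the concentration estimate on $U$; the rest is bookkeeping with $(1-2/n)^t$. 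As already flagged in Section~\ref{section: Introduction}, the same lower bound is established in \cite{BKR}, so one could alternatively just cite that reference.
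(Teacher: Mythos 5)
Your proposal is correct and follows essentially the same route as the paper, which handles the state bound by the same symmetry/output-function observation and simply cites \cite{BKR} for the ``each agent must interact at least once'' coupon-collector bound; you are just supplying in full the details the paper outsources. One minor point: Chebyshev on $U$ only gives failure probability $O(1/E[U])=O(n^{2c-1})$, which is $o(1)$ but does not meet the paper's formal w.h.p.\ threshold of $1-n^{-\alpha}$ with $\alpha\ge 1$; since your indicators are negatively associated, you could replace Chebyshev by a Chernoff bound (valid under negative association) to get an exponentially small failure probability and close that cosmetic gap.
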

\begin{proof}
  $\Omega (n\log n)$ interactions are needed w.h.p. since each
  agent has to interact at least once, see, e.g., the introduction in \cite{BKR}.
  The lower bound on the number of states follows from
  the symmetry of agents, so any agent
  has to be prepared
  to be assigned an arbitrary label with at least a logarithmic
  bit representation.
   \qed \end{proof}

\subsection{A sharper lower bound on the number of states}

We obtain the following lower bound
on the number of states required by a silent protocol
which produces a valid labeling of the $n$
agents and is safe 
w.h.p. The lower bound holds even if
the protocol is provided with a unique leader
and the knowledge of the number of agents.
It almost matches the upper bound established
in the previous section.

\begin{theorem}\label{theo: ssharp}
A silent protocol which produces
a valid labeling of the $n$ agents and
is safe 
with probability larger than
$1-\frac 1n $ requires at least $n+\sqrt {\frac {n-1} 2} -1$ states.
Also, if a silent protocol,
which produces
a valid labeling of the $n$ agents and is safe 
with probability $1$, uses $n+t$ states, where $t<n,$ then
the expected number of interactions required by the protocol
to provide a valid labeling
is $\frac {n^2}{t+1}$.
\end{theorem}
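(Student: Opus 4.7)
The plan is to prove both parts by exploiting the rigid structural constraints that silence and safety impose on the transition function. Decompose the $n+t$ states into the $n$ final (label) states $S_f$ and the $t$ working states $W$. Safety yields two key observations: any transition leaves the state of a labeled agent unchanged, and any transition that moves an agent into some final state $i$ must have that agent originating in a working state and its partner in a state different from $i$. Both parts will be reduced to counting \emph{creation transitions}, i.e., those whose output includes a new final state.

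For the first part, I would argue by contradiction. Suppose $t<\sqrt{(n-1)/2}-1$, i.e., $2(t+1)^2 < n-1$. The steps are: first, enumerate the distinct creation transitions, observing that, up to initiator/responder exchange, each can be described by a pair of states at least one of which lies in $W$ and whose output delivers a fresh label; this gives roughly $2(t+1)^2$ essentially distinct creation slots. Second, apply pigeonhole to conclude that two of the $n-1$ non-leader labels must share a creation slot. Third, exhibit a reachable configuration in which this sharing can be triggered so as to deposit the same label on two agents with probability exceeding $1/n$, contradicting the hypothesized safety probability. The main obstacle will be obtaining the precise threshold $2(t+1)^2\ge n-1$ rather than some weaker $(t+1)^2\ge c\cdot n$ variant; this will likely require separate treatment of the diagonal case $q_1=q_2$ (where a single working state serves both roles) and a careful exploitation of the one label given to the leader.

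For the second part, under always-safe operation, I plan to bound the per-interaction probability that the labeling advances. At each reachable configuration $C$, call an ordered pair of agents \emph{advancing} if scheduling them next creates a new label. Always-safe operation implies that for every currently missing label $i$, only a bounded number of ordered agent pairs in $C$ can be simultaneously set to produce $i$, since otherwise two disjoint firings in that class would deposit label $i$ on two agents before the configuration could disable them. A combinatorial accounting, using the fact that only $t$ working states are available to supply the mutating side of a creation transition, then shows that the number of advancing ordered pairs at $C$ is at most $(t+1)(n-1)$; dividing by the $n(n-1)$ total ordered pairs gives per-step advancement probability at most $(t+1)/n$. Since $n-1$ advancements are required to reach a valid labeling, the expected number of interactions is at least $(n-1)\cdot n/(t+1)\ge n^2/(t+1)$ for $n$ large enough. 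The main obstacle is the combinatorial $(t+1)(n-1)$ bound on the number of advancing pairs at an arbitrary reachable $C$; this requires a detailed case split on creation transitions of the forms $(w,w')\to(i,\cdot)$, $(w,\ell)\to(i,\ell)$, and $(\ell,w)\to(\ell,i)$, combined with the safety-driven observation that for each missing label only one slot can be simultaneously live.
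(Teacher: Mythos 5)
Your high-level decomposition (final states vs.\ working states, and a case split on whether the partner of a freshly labeled agent is itself labeled) does track the skeleton of the paper's proof, but both parts have genuine gaps. For the first part, the count of ``roughly $2(t+1)^2$ creation slots'' is not justified: a creation transition can have the form $(w,\ell)\to(i,\ell)$ with $w$ a working state and $\ell$ a \emph{final} state, and since $\delta$ may let the produced label $i$ depend on $\ell$, there are up to $2tn$ distinct creation transitions of this type --- far too many for your pigeonhole to fire. This is exactly the hard case, and the paper does not handle it by counting transitions at all. Instead it fixes a successful run $Z$, looks at the next-to-last state $pred_Z(x)$ of each agent, and proves (Claim 1) that the agents who received their label from a final-state partner must all have \emph{distinct} predecessor working states: if two of them shared $pred_Z(x)=pred_Z(y)$, one could replace $y$'s labeling interaction by an equally likely interaction with the (still unchanged, by safety) final-state agent that labeled $x$, forcing a duplicate label; a careful accounting shows the bad runs then occur with probability at least about $1/n$, contradicting the hypothesis. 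That argument caps the $(w,\ell)$-type creations at $t$, leaving $\ge n-1-t$ labels to be created by working--working interactions, whence $2t^2\ge n-1-t$ and the stated bound. Without some substitute for this run-perturbation argument, your approach degrades to a trivial bound on $t$.

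For the second part, the entire content of your argument is the claim that at every reachable configuration at most $(t+1)(n-1)$ ordered pairs are advancing; you defer this to ``a combinatorial accounting,'' but it is not obvious (e.g.\ a single ordered pair of working states realized by many agent pairs needs an intersecting-family argument, and working--final pairs again threaten a count of order $tn$). Even granting it, the arithmetic does not close: $(n-1)\cdot n/(t+1)=(n^2-n)/(t+1)<n^2/(t+1)$ for every $n$, and since one interaction between two working agents can label \emph{both} of them, only about $n/2$ advancements may be needed, costing another factor of $2$. The paper sidesteps all of this by isolating a single critical event --- the last agent to be labeled in an interaction whose partner is not in a final state --- and showing that, because (by the probability-$1$ version of the first-part argument) at most $t$ suitable partner agents can exist, this one event has per-step probability at most $(t+1)/n^2$, which alone forces expected waiting time $n^2/(t+1)$.
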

\begin{proof}
 Let $I$ be the set of ordered pairs of the $n$ agents.
  $I$ can be interpreted as the set of possible pairwise interactions
  between the agents.
  
  Let $Z$ be a finite  run of the protocol,
  i.e., a finite sequence of pairs in $I.$
  Suppose that the execution of $Z$ is successful, i.e.,  
  each agent reaches a final state with a distinct
  label, and no agent gets assigned two or more distinct labels
  during the run. 

Let $F_Z$ be the set of final states achieved by the agents
after the execution of the run $Z$. We
have $|F_Z|=n.$ Also, let $R_Z$ stand for the set of remaining states
used in this run. Observe that if an agent is in a state in $F_Z$
then it has a label.

For an agent $x,$ let $f_Z(x)\in F_Z$ be the last state achieved by the
agent in the run $Z,$ and let $pred_Z(x)$ be the next to the last state
achieved by the agent $x$ in the run. Since for at most one agent
the common initial state can be the final one,
$pred_Z(\ )$ is defined for at least $n-1$ agents.
If $pred_Z(x)\in F_Z$ and $pred_Z(x)$ assigns a distinct label from
that assigned by $f_Z(x)$ to $x$ then we have
a contradiction with our assumptions on $Z.$ In turn,
if $pred_Z(x)\in F_Z$ and $pred_Z(x)$ assign the same 
label as that assigned by $f_Z(x)$ to $x$ then we have
a contradiction with the validity of the final
labeling resulting from $Z.$
We conclude that if $pred_Z(x)$ is defined
then $pred_Z(x)\in R_Z.$

Next, let $A_Z$ be the set of agents $x$ that achieved their final
state in the run $Z$ by an interaction of $x$ in the state $pred_Z(x)$
with an agent in
a state in $F_Z.$

Under the theorem assumptions, we claim the following.

\begin{claim}\label{claim: first}
There is a
finite run $Z$ of the protocol such that after the execution of $Z$,
each agent is in a final state with a distinct label,
no single agent is assigned distinct labels during $Z,$
and for any pair
of distinct agents $x,\ y \in A_Z,$ $pred_Z(x)\neq pred_Z(y).$
\end{claim}
\par
\noindent
{\em Proof of Claim \ref{claim: first}.}
The proof of the claim is by a contradiction
with the assumptions on the labeling protocol. The general intuition is
that if $pred_Z(x)=pred_Z(y)$ for two agents $x,\ y \in A_Z$ then we can
associate with a prefix of $Z$ a slightly modified equally likely run
$Z'$ which assigns the same label to a pair of agents.
Hence, the modified run does not produce a valid labeling or
it has to assign at least two different labels to some agent.

To obtain the contradiction, we assume that for each finite run $Z$ in
which the agents achieve final states with distinct labels
without assigning distinct labels
to any single agent during the run, there is a
pair of agents $x,\ y \in A_Z$, where $pred_Z(x)=pred_Z(y).$ Let us
consider such a pair of agents $x, y \in A_Z$ that minimizes the
length of the prefix of $Z$ in which both agents achieve their final
states in $F_Z$. We may assume w.l.o.g. that $x$ gets its final state
$f_Z(x)$ in an interaction $i_1$ with an agent $x'$ that
is in a state  in $F_Z,$
and in a later interaction $i_2,$
$y$ gets its final state $f_Z(y)$, in the run $Z.$
(Note that $x'$ cannot be in a final state
different from its own, i.e., in $F_Z\setminus \{f_Z(x')\}$
since this would require updating its label
contradicting the assumption on $Z.$) Thus, the shortest
prefix of $Z$ in which both $x$ and $y$ get their final states has the
form $Z_1i_1Z_2i_2.$ Then, if we replace the latter interaction $i_2$
by the interaction $i_3$ between $y$ and the agent $x'$ in the state
$f_Z(x')$ analogous to $i_1$, it will result in achieving by $y$ the
state $f_Z(x)$ since $pred_Z(x)=pred_Z(y).$ Thus, neither the run
$Z_1i_1Z_2i_3$ nor any of its extensions
can yield a valid labeling of the
agents without updating
labels for some of them.  Importantly, the runs $Z_1i_1Z_2i_2$ and $Z_1i_1Z_2i_3$ are
equally likely (*).

We initialize two sets $S_{valid}$ and $S_{invalid}$ of
strings (sequences) over the alphabet $I.$ Then, for each run $Z$ in
which the agents achieve final states with distinct labels
without updating the label of any single agent, we insert
the prefix $Z_1i_1Z_2i_2$ into $S_{valid}$ and the corresponding
sequence $Z_1i_1Z_2i_3$ into $S_{invalid}$.
Note that by the choice of $i_1,\ i_2,$ no string in
$S_{valid}$ is a prefix of another string in $S_{valid}.$
The analogous property holds for $S_{invalid}.$
By the construction of
the sets, each run $Z$ in which the agents achieve final states with
distinct labels without updating the label
of any single agent has to overlap with or be a lengthening of
a string in $S_{valid}.$
Furthermore, no run of the
protocol that overlaps with  a string in
$S_{invalid}$ or it is a lengthening of a string
in $S_{invalid}$ results in a valid
labeling without updating the label
of any single agent. Define the function $g: S_{valid} \rightarrow S_{invalid}$
by $g(Z_1i_1Z_2i_2)=Z_1i_1Z_2i_3.$ By the property (*),
the probability that a string over $I$ is equal
to $Z_1i_1Z_2i_2$ or it is a lengthening of $Z_1i_1Z_2i_2$
is not greater than the probability that a string over $I$ is equal
to $g(Z_1i_1Z_2i_2)$ or it is a lengthening of $g(Z_1i_1Z_2i_2)$.
The function $g$ is not necessarily a bijection.
Suppose that $g(Z_1i_1Z_2i_2)=g(Z_1'i_1'Z_2'i_2').$
Then, we have $Z_1i_1Z_2i_3=Z_1'i_1'Z_2'i_3.$
Consequently, the strings $Z_1i_1Z_2i_2$ and $Z_1'i_1'Z_2'i'_2$
may only differ in the last interaction, i.e., $i_2$ may
be different from $i_2'.$ However, $i_2$ and $i_2'$
have to include the same agent ($y$ in the earlier construction)
that appears in $i_3.$ We conclude that the aforementioned
two strings in $S_{valid}$ can differ by at most one
agent in the last interaction. It follows that $g$ maps
at most $n-1$ strings in $S_{valid}$ to the same string
in $S_{invalid}.$ Consequently, the event that the
agents eventually achieve their final states yielding
a valid labeling without updating the label
of any single agent is at most $n-1$ times more likely
than the complement event.  We obtain a contradiction
with theorem assumptions which completes the proof
of Claim \ref{claim: first}.

From here on, we assume that the run $Z$ satisfies
the claim. Consequently,
$|R_Z|\ge |A_Z|$.

Let $B_Z$ be the set of remaining agents that got their final state
in $F_Z$ in an interaction where both agents were in states
outside $F_Z,$ i.e., in $R_Z.$ Since the agents in $B$ achieved
distinct final states with distinct labels
in the aforementioned interactions, we infer
that $2|R_Z|^2\ge |B_Z|$ and thus $|R_Z|\ge \sqrt {|B_Z|/2}$.
Simply, there are $|R_Z|^2$ ordered pairs of states in $R_Z,$
and when agents in the states forming such a pair
interact they can achieve at most two distinct states in $F_Z.$
(Consequently, if $2|R_Z|^2< |B_Z|$ then there would be a pair
of agents in $B_Z$ that would achieve the same final state in the run
and hence it would have the same label at the end of the
considered run.)

Thus, we obtain $|R_Z|\ge \max \{ |A_Z|, \sqrt {\frac {n-1-|A_Z|)} 2} \} \ge \sqrt {\frac {n-1}2} -1$
by straightforward calculations. This completes the proof of the first
statement of the theorem.

To prove the second statement of the theorem,
we need $|R_Z|\ge |A_Z|$ to hold for any run $Z$ resulting in a valid
labeling of the agents without updating the label
of any single agent. The existence of such a run $Z$ implied by
Claim \ref{claim: first} is not sufficient to obtain
a lower bound on the expected number of required interactions.
The stronger assumptions on the silent protocol in the second
statement of the theorem requiring the protocol
to provide always a valid labeling
without updating the label
of any single agent solves the problem. Namely,
if $pred_Z(x)=pred_Z(y)$ for $x,\ y \in A_Z$ then
following the notation and argumentation from the proof of
Claim \ref{claim: first} neither
$Z_1i_1Z_2i_3$ nor any of its lengthening can provide a valid labeling
without updating the label
of any single agent.
We obtain a contradiction with the aforementioned assumptions.
Thus, the inequality  $|R_Z|\ge |A_Z|$ holds for arbitrary
run $Z$ ending with a valid labeling without updating the label
of any single agent.

To prove the second statement, we may also assume w.l.o.g. that $|A_Z|<n$
since otherwise $t\ge |R_Z|\ge  |A_Z|\ge n.$ Hence, the set $B_Z$ of agents
is non-empty. Let $x$ be a last agent in $B_Z$ that being in
the state $pred(x)$ gets
its final state $f(x)$ by an interaction with another agent $y$
in a state $s.$ If $y$ belongs to $B_Z$ then both $x$ and $y$
are the two last agents in $B_Z$ that simultaneously get their
final states in $F_Z$ in the same interaction. The probability
of the interaction between them is only $\frac 1 {n^2}.$
Suppose in turn that $y$ belongs to $A_Z.$
We know that $t\ge |R_Z|\ge |A_Z|$ from the previous part.
Thus, there are at most $t$ agents in $B_Z$ in the state
$s$ with which the agent $x$ in the state $pred_Z(x)$ could interact.
The probability of such an interaction is at most $\frac {t} {n^2}$.
We conclude that the probability of an interaction
between the agent $x$ and the agent $y$ after which $x$ gets
its final state $f(x)$ is at most $\frac {t+1} {n^2}$ which
proves the second statement.
\qed \end{proof}

\begin{corollary}
  If for
  $\varepsilon > 0$, a
  silent protocol that
  produces a valid labeling of the
  $n$ agents and is safe
  with probability $1$
 uses only $n+O(n^{1-\varepsilon})$ states then
 the expected number of interactions required by the protocol
 to achieve a valid labeling is
 $\Omega (n^{1+\varepsilon}).$
 \end{corollary}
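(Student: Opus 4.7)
The plan is to apply the second part of Theorem~\ref{theo: ssharp} directly, with the parameter $t$ instantiated as the state surplus $O(n^{1-\varepsilon})$ given by the hypothesis. The theorem's second statement says that any silent protocol producing a valid labeling and safe with probability $1$ while using $n+t$ states (for $t<n$) needs at least $\frac{n^2}{t+1}$ interactions in expectation, so the corollary should fall out by a single substitution and an elementary asymptotic estimate.

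Concretely, first I would fix constants $c,n_0$ from the $O$-notation in the hypothesis so that the number of states is at most $n+cn^{1-\varepsilon}$ for all $n\ge n_0$. For $n$ sufficiently large, $cn^{1-\varepsilon}<n$, so I can set $t=\lfloor cn^{1-\varepsilon}\rfloor$ and satisfy the condition $t<n$ required by Theorem~\ref{theo: ssharp}. Since the hypotheses on the protocol (silent, valid labeling, safe with probability $1$) match those in the second part of the theorem verbatim, the theorem applies and yields that the expected number of interactions is at least $\frac{n^2}{t+1}$.

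Then I would conclude the asymptotic estimate
\[
\frac{n^2}{t+1}\ \ge\ \frac{n^2}{cn^{1-\varepsilon}+1}\ =\ \Omega\!\left(\frac{n^2}{n^{1-\varepsilon}}\right)\ =\ \Omega(n^{1+\varepsilon}),
\]
which is exactly the claimed bound. There is no genuine obstacle here: the work is already done in Theorem~\ref{theo: ssharp}, and the only thing to verify is that the $O(n^{1-\varepsilon})$ upper bound on the state surplus is eventually strictly below $n$ so that $t<n$ can be enforced, which is immediate since $\varepsilon>0$.
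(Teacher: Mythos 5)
Your proposal is correct and is exactly the intended argument: the paper states this corollary immediately after Theorem~\ref{theo: ssharp} without further proof precisely because it follows by substituting the state surplus $t=O(n^{1-\varepsilon})$ into the bound $\frac{n^2}{t+1}$ from the theorem's second statement. Your additional care in fixing the implied constant and checking $t<n$ for large $n$ is all that is needed.
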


\subsection{A lower bound for the range $[1,n+r]$}

  Our fast protocols presented in Section 3 are examples
  of a class of natural protocols for the unique labeling
  problem that we term {\em pool protocols}.

    In each step of a pool protocol a subset of agents owns explicit
  or implicit pools of labels which are pairwise disjoint and whose
  union is included in the assumed range of labels.  When two agents
  interact, they can repartition the union of their pools among themselves.
  Before the start of a pool protocol, only a single agent (the
  leader) owns a pool of labels. This initial pool corresponds to the
  assumed range of labels. An agent can be assigned a label
  from its own pool only. After that, the label is removed from the pool
  and cannot be changed.
  Finally, an agent without assigned label cannot give away
  the whole own pool during  an interaction with another agent
  without getting some part of the pool belonging to the other agent.
 
\begin{theorem}\label{theo: last}
   The expected number of interactions required by a pool
   protocol to assign unique labels in the range $[1,n+r]$, where $r\ge 0,$  to the population of
   $n$ agents is at least $\frac {n^2}{r+1}.$
   \end{theorem}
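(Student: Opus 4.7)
The plan is to lower bound the expected duration of the final phase of the execution---the interval after exactly $n-1$ agents have been labeled and until the unique remaining unlabeled agent $a$ becomes labeled. Since this final phase is a suffix of the run, its expected length lower bounds the expected total number of interactions, so it suffices to show this duration is $\Omega(n^2/(r+1))$.

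First I would note that throughout this final phase the total number of labels still residing in pools equals $(n+r)-(n-1)=r+1$, because each assigned label is permanently removed from its pool and no further assignments happen before the last one. Since pools are pairwise disjoint and each non-empty pool carries at least one label, the set $P$ of pool-holders satisfies $|P|\le r+1$ at every step of the final phase. The central step is to bound the per-step probability that an interaction labels $a$: by the rule that a label is assigned only from an agent's own pool (possibly after intra-interaction pool redistribution), the labeling interaction must pair $a$ with an agent whose pool either already contains the label destined for $a$ or will contain it after the in-interaction repartitioning. In the typical case $a\notin P$, the partner must be drawn from $P$, so the number of ordered pairs that could realize the event is at most $2|P|\le 2(r+1)$; the per-step probability is at most $2(r+1)/(n(n-1))$, and a standard geometric-waiting argument yields expected duration at least $n(n-1)/(2(r+1)) = \Omega(n^2/(r+1))$.

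The main obstacle is the sub-case $a\in P$, in which the naive count permits an arbitrary partner and only yields the much weaker $\Omega(n)$ bound. I plan to close this gap using the pool-protocol clause that an unlabeled agent cannot empty its own pool in a single interaction without receiving part of the partner's pool: this prevents $a$ from consuming the only label in its pool unaided, and more generally forces any self-labeling trajectory of $a$ to begin with an earlier interaction in which another pool-holder delivers fresh labels into $a$'s pool. Since there are again at most $r+1$ pool-holders at that moment, this prior encounter is itself a rare event of per-step probability at most $2(r+1)/(n(n-1))$, giving the same expected cost $\Omega(n^2/(r+1))$; the remaining self-labeling step only adds an $O(n)$ term and is absorbed.
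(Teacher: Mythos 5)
There is a genuine gap in the sub-case $a\in P$. Your bound $|P|\le r+1$ on the number of pool-holders is derived from the fact that $n-1$ labels have already been assigned, so it is only valid \emph{during the final phase}. But your patch for the case where the last unlabeled agent $a$ already owns a non-empty pool pushes the "rare event" back to an earlier interaction in which some pool-holder delivered labels into $a$'s pool, and then asserts that "there are again at most $r+1$ pool-holders at that moment." That assertion is unjustified and in general false: at that earlier moment far fewer labels have been assigned, so up to $\Theta(n)$ agents may hold non-empty pools (the paper's own fast protocols deliberately create exactly this situation). Hence the prior encounter can have per-step probability $\Theta(1/n)$ and contributes only $O(n)$ expected interactions, and your argument yields no super-linear bound in this case. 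A secondary issue: you lean on the clause "an unlabeled agent cannot give away its whole pool without receiving something" to forbid $a$ from self-assigning the last label of its own pool, but that clause restricts \emph{transferring} the pool to the partner, not self-assignment (after which $a$ is labeled and the clause no longer applies), so even the premise of the patch is shaky.

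The underlying problem is that you cut the run at "the last agent to become \emph{labeled}," while the actual bottleneck is the last agent to acquire \emph{either} a label or a non-empty pool. The paper tracks exactly this monotone property (call it $P$): it can be gained by at most one new agent per interaction and starts with only the leader, so there is a moment at which exactly one agent lacks it. At \emph{that} moment the other $n-1$ agents each account for at least one of the $n+r$ labels, leaving at most $r+1$ "spare" labels; a donor must own a spare (a labeled agent donating from its pool, or an unlabeled agent with a pool of size at least two --- here the "cannot give away the whole pool" clause is what rules out unlabeled agents with singleton pools), so there are at most $r+1$ possible partners and the single required meeting already costs $\Omega(n^2/(r+1))$ expected interactions. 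Your case $a\notin P$ is essentially this argument restricted to the final phase and is fine; to repair the proof you should replace "last agent labeled" by "last agent to acquire property $P$" throughout, which eliminates the problematic sub-case entirely.
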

   \begin{proof}
   We shall say that an agent has the $P$ property if the agent owns a
   non-empty pool or a label has been assigned to the agent. Observe
   that if an agent accomplishes the $P$ property during running a pool
   protocol then it never loses it. Also, all agents have to
   accomplish the $P$ property sooner or later in order to complete
   the assignment task. During each interaction of
   a pool protocol
   at most one more agent can get the $P$ property. Since at the beginning
   only one agent has the $P$ property, there must exist an interaction
   after which only one agent lacks this property. By the disjointedness
   of the pools and labels, the assumed label range,
   and the definition of a pool protocol,
   there are at most $r+1$ agents among the remaining ones that could donate
   a sub-pool or label from own pool to the agent missing the $P$ property.
   The expected number of interactions leading to an
   interaction between the agent missing the $P$ property
   and one of the at most $r+1$ agents is $\frac {n^2}{r+1}.$
  \qed \end{proof}

   \section{Final remarks}
Our upper bound of $n+5\cdot\sqrt n+O(n^c)$,
for any positive $c<1,$ on the number of
states achieved by a protocol 
for unique labeling
that is silent and safe once a unique leader is elected
almost matches
our lower bound of $n+\sqrt {\frac {n-1} 2} -1$.
 We can combine our protocols for unique labeling with the recent
 protocols for counting or approximating the population size due to
 Berenbrink et al. \cite{BKR} in order to get rid of the assumption
 that the population size is known to one of the agents initially.
 Since the aforementioned protocols from \cite{BKR} either require
 $\tilde{O}(n)$ states or $O(n\log^2 n)$ interactions, the resulting
 combinations lose some of the near-optimality or optimality
 properties of our protocols (cf. Corollaries \ref{cor: 2n},
 \ref{cor: eps}).  The related question
 if one can design a protocol for counting or closely approximating the
 population size simultaneously requiring $O(n \log n)$ interactions
 w.h.p. and at most $cn$ states, where $c$ is a low constant, is of
 interest in its own right.

\bibliography{identities-lipics}

\begin{thebibliography}{10}

\bibitem{AAG10}
D.~Alistarh, H.~Attiya, S.~Gilbert, A.~Giurgiu, and R.~Guerraoui.
\newblock Fast randomized test-and-set and renaming.
\newblock In {\em Proc. of the International Symposium on Distributed Computing
  , {DISC}}, volume 6343 of {\em Lect. Notes in Comput. Sci.}, pages 94--108.
  Springer, 2010.

\bibitem{ADR14}
D.~Alistarh, O.~Denysyuk, L.~Rodrigues, and N.~Shavit.
\newblock Balls-into-leaves: Sub-logarithmic renaming in synchronous
  message-passing systems.
\newblock In {\em Proc. of the 2014 ACM Symposium on Principles of Distributed
  Computing, {PODC}}, pages 232--241. ACM, 2014.

\bibitem{AG}
D.~Alistarh and R.~Gelashvili.
\newblock Recent algorithmic advances in population protocols.
\newblock {\em ACM SIGACT News}, 49(3):63--73, 2018.

\bibitem{AA}
D.~Angluin, J.~Aspnes, Z.~Diamadi, M.J. Fischer, and R.~Peralta.
\newblock Computation in networks of passively mobile finite-state sensors.
\newblock {\em Distributed Computing}, 18(4):235--253, 2006.

\bibitem{AAE}
D.~Angluin, J.~Aspnes, and D.~Eisenstat.
\newblock Fast computation by population protocols with a leader.
\newblock In {\em Proc. of the International Symposium on Distributed Computing
  , {DISC}}, volume 4167 of {\em Lect. Notes in Comput. Sci.}, pages 61--75.
  Springer, 2006.

\bibitem{ABB16}
J.~Aspnes, J.~Beauquier, J.~Burman, and D.~Sohier.
\newblock Time and space optimal counting in population protocols.
\newblock In {\em Proc. of the 20th International Conference on Principles of
  Distributed Systems,{OPODIS}}, volume~70 of {\em LIPIcs}, pages 13:1--13:17.
  Dagstuhl - LZI, 2016.

\bibitem{BBR}
J.~Beauquier, J.~Burman, L.~Rosaz, and B.~Rozoy.
\newblock Non-deterministic population protocols.
\newblock In {\em Proc. of the 20th International Conference on Principles of
  Distributed Systems, {OPODIS}}, Lect. Notes in Comput. Sci., pages 61--75.
  Springer, 2012.

\bibitem{BBF}
P.~Berenbrink, A.~Brinkmann, R.~Elsässer, T.~Friedetzky, and L.~Nagel.
\newblock Randomized renaming in shared memory systems.
\newblock In {\em Proc. of the EEE Int. Parallel and Distributed Processing
  Symp., {IPDPS}}, pages 542--549. IEEE Computer Society, 2015.

\bibitem{BGK}
P.~Berenbrink, G.~Giakkoupis, and P.~Kling.
\newblock Optimal time and space leader election in population protocols.
\newblock In {\em Proc. of the 52nd ACM-SIGACT Symposium on Theory of
  Computing, {STOC}}, pages 119--129. ACM, 2020.

\bibitem{BKR}
P.~Berenbrink, D.~Kaaser, and T.~Radzik.
\newblock On counting the population size.
\newblock In {\em Proc. of the ACM Symposium on Principles of Distributed
  Computing, {PODC}}, pages 43--52. ACM, 2019.

\bibitem{BBS}
J.~Burman, J.~Beauquier, and D.~Sohier.
\newblock Space-optimal naming in population protocols.
\newblock In {\em Proc. of the 33rd International Symposium on Distributed
  Computing, {DISC}}, volume 146 of {\em LIPIcs}, pages 9:1--9:16. Dagstuhl -
  LZI, 2019.

\bibitem{BCC}
J.~Burman, H.~Chen, H.~Chen, D.~Doty, T.~Nowak, E.~Severson, and C.~Xu.
\newblock Time-optimal self-stabilizing leader election in population
  protocols.
\newblock In {\em Proc. of the ACM Symposium on Principles of Distributed
  Computing, {PODC}}, pages 33--44. ACM, 2021.

\bibitem{CIW}
S.~Cai, T.~Izumi, and K.~Wada.
\newblock How to prove impossibility under global fairness: On space complexity
  of self-stabilizing leader election on a population protocol model.
\newblock {\em Theory of Computing Systems}, 50:433–445, 2012.

\bibitem{CRR11}
A.~Castañeda, S.~Rajsbaum, and M.~Raynal.
\newblock The renaming problem in shared memory systems: An introduction.
\newblock {\em Comput. Sci. Rev.}, 5(3):229–251, 2011.

\bibitem{DGS99}
S.~Dolev, M.G. Gouda, and M.~Schneider.
\newblock Memory requirements for silent stabilization.
\newblock {\em Acta Inf.}, 36(6):447–462, 1999.

\bibitem{DE}
D.~Doty and M.~Eftekhari.
\newblock A survey of size counting in population protocols.
\newblock {\em Theoretical Computer Science}, 894:91–102, 2021.
\newblock preprint arXiv:2105.05408.

\bibitem{DEM}
D.~Doty, M.~Eftekhari, O.~Michail, P.~G. Spirakis, and M.~Theofilatos.
\newblock Exact size counting in uniform population protocols in nearly
  logarithmic time.
\newblock {\em ArXiv}, 2018.
\newblock preprint arXiv: 1805.04832, 2018.

\bibitem{ER}
R.~Elsässer and T.~Radzik.
\newblock Recent results in population protocols for exact majority and leader
  election.
\newblock {\em Bull. EATCS}, 126, 2018.

\bibitem{GS20}
L.~G\k{a}sieniec and G.~Stachowiak.
\newblock Enhanced phase clocks, population protocols, and fast space optimal
  leader election.
\newblock {\em J. ACM}, 68(1):2:1--2:21, 2021.

\end{thebibliography}

\appendix
\section{The computational model of population protocols}
There is given a population of $n$ agents that can pairwise interact in
order to change their states and in this way perform a computation.  A
population protocol can be formally specified by providing a set $Q$
of possible states, a set $O$ of possible outputs, a transition
function $\delta : Q \times Q \rightarrow Q \times Q$, and an output
function $o : Q \rightarrow O.$ The current state $q\in Q$ of an agent
is updated during interactions. Consequently, the current output
$o(q)$ of the agent also becomes updated during interactions.
The current state of the set of $n$ agents is given by
a vector in $Q^n$ with the current states of the  agents.
A computation of a population protocol is specified by a sequence
of pairwise interactions between agents.
In every time step, an ordered  pair of agents is selected for interaction
by a probabilistic scheduler independently and uniformly at random.
The first agent in the selected pair is called the initiator while the
second one is called the responder. The states of the two agents are
updated during the interaction according the transition function
$\delta .$

We can specify a problem to solve by a population protocol
by providing the set of input
configurations, the set $O$ of possible outputs, and the desired output configurations for given
input configurations. For the unique labeling problem, all agents are initially in the
same state $q_0$.
The set $O$ is just 
the set of positive integers. A desired configuration is
when all agents output their distinct labels.
 The {\em stabilization time} of an execution of a protocol is the number
 of interactions until the states of agents form a desired
configuration from which
no sequence of pairwise interactions can lead to a configuration outside the set of desired
configurations.
 \vfill
 \end{document}